\newtheorem{theorem}{Theorem}
\newtheorem{corollary}[theorem]{Corollary}
\newtheorem{definition}[theorem]{Definition}
\newtheorem{example}[theorem]{Example}
\newtheorem{proposition}[theorem]{Proposition}
\newtheorem{remark}[theorem]{Remark}
\newenvironment{proof}[1][Proof]{\noindent\textbf{#1.} }{\ \rule{0.5em}{0.5em}}
\DeclareMathOperator{\sech}{sech}
\begin{document}

\title{Transmutations for Darboux transformed operators with applications}
\author{Vladislav V. Kravchenko and Sergii M. Torba\\{\small Department of Mathematics, CINVESTAV del IPN, Unidad Queretaro, }\\{\small Libramiento Norponiente No. 2000, Fracc. Real de Juriquilla,
Queretaro, }\\{\small Qro. C.P. 76230 MEXICO e-mail:
vkravchenko@qro.cinvestav.mx\thanks{Research was supported by CONACYT, Mexico.
Research of second named author was supported by DFFD, Ukraine (GP/F32/030)
and by SNSF, Switzerland (JRP IZ73Z0 of SCOPES 2009--2012). }}}
\maketitle

\begin{abstract}
We solve the following problem. Given a continuous complex-valued potential
$q_{1}$ defined on a segment $[-a,a]$ and let $q_{2}$ be the potential of a
Darboux transformed Schr\"{o}dinger operator, that is $q_{2}=-q_{1}%
+2\bigl(\frac{f^{\prime}}{f}\bigr)^{2}$ where $f$ is a nonvanishing solution
of the equation $A_{1}f = \big(\frac{d^{2}}{dx^{2}}-q_{1}(x)\big)f=0$. Suppose
a transmutation operator $\mathbf{T}_{1}$ is known such that $A_{1}%
\mathbf{T}_{1}u=\mathbf{T}_{1}\frac{d^{2}}{dx^{2}}u$ for any $u\in
C^{2}[-a,a]$. Find an analogous transmutation operator for $A_{2}=\frac{d^{2}%
}{dx^{2}}-q_{2}(x)$.

It is well known that the transmutation operators can be realized in the form
of Volterra integral operators with continuously differentiable kernels. Given
a kernel $K_{1}$ of the transmutation operator $\mathbf{T}_{1}$ we find the
kernel $K_{2}$ of $\mathbf{T}_{2}$ in a closed form in terms of $K_{1}$. As a
corollary interesting commutation relations between $\mathbf{T}_{1}$ and
$\mathbf{T}_{2}$ are obtained which then are used in order to construct the
transmutation operator for the one-dimensional Dirac system with a scalar potential.

\end{abstract}

\section{Introduction}

Transmutation operators are a widely used tool in the theory of linear
differential equations (see, e.g., \cite{Gilbert}, \cite{Carroll},
\cite{LevitanInverse}, \cite{Marchenko}, \cite{Trimeche} and the recent review
\cite{Sitnik}). It is well known that under certain quite general conditions
the transmutation operator transmuting the operator $A=-\frac{d^{2}}{dx^{2}%
}+q(x)$ into $B=-\frac{d^{2}}{dx^{2}}$ is a Volterra integral operator with
good properties. Its kernel can be obtained as a solution of a certain Goursat
problem for the Klein-Gordon equation with a variable coefficient. In the book
\cite{Fage} another approach to the transmutation was developed. It was shown
that to every (regular) linear second-order ordinary differential operator $L$
one can associate a linear space spanned on a so-called $L$-basis -- an
infinite family of functions $\left\{  \varphi_{k}\right\}  _{k=0}^{\infty}$
such that $L\varphi_{k}=0$ for $k=0,1$, $L\varphi_{k}=k(k-1)\varphi_{k-2}$,
for $k=2,3,\ldots$ and all $\varphi_{k}$ satisfy certain prescribed initial
conditions. Then the operator of transmutation was introduced as an operation
transforming functions from one such linear space corresponding to a certain
operator $L$ to functions from another linear space corresponding to another
operator $M$, and the transformation consists in substituting the $L$-basis
with the $M$-basis preserving the same coefficients in the expansion.

In \cite{CKT} the relation between the transmutation operator in the form of a
Volterra integral operator and as a transmutation of the bases was clarified.
In particular, it was shown that the corresponding Volterra integral
transmutation operator transmutes the powers of the independent variable
$x^{k}$ into the elements $\varphi_{k}$ of an appropriate $L$-basis. In order
to assure such an important mapping property of the transmutation operator a
parametrized family of transmutation operators for the Schr\"{o}dinger
operator $A$ was introduced. This parametrized family resulted to be a
necessary and natural tool for solving the problem of construction of the
transmutation operator for a Darboux transformed Schr\"{o}dinger operator. The
solution to this problem is the main aim of the present paper.

We give an explicit representation for the kernel of the transmutation
operator corresponding to the Darboux transformed potential in terms of the
transmutation kernel for its superpartner (Theorem \ref{Th T2Volterra}).
Moreover, this result leads to interesting commutation relations between the
two transmutation operators (Corollary \ref{Cor Commutation Relations}) which
in their turn allow us to obtain a transmutation operator for the
one-dimensional Dirac system with a scalar potential as well as to prove
(Theorem \ref{Th Transmutation C}) the main property of the transmutation
operator under less restrictive conditions than it has been proved until now.
Namely, we show that the considered transmutation operators transmute $A$ into
$B$ for any continuous (and not necessarily differentiable) potential $q $
admitting a nonvanishing, in general, complex-valued solution (and this
condition is always fulfilled, e.g., when $q$ is real valued). We give several
examples of explicitly constructed kernels of transmutation operators. It is
worth mentioning that in the literature there are very few explicit examples
and even in the case when $q$ is a constant such kernel was presented recently
in \cite{CKT}. The results of the present paper allow us to enlarge
considerably the list of available examples and give a relatively simple tool
for constructing Darboux related sequences of the transmutation kernels.

In Section \ref{Sect Transmutations and Recursive} we introduce two main
objects, the transmutation operators and the systems of recursive integrals or
the $L$-bases. Besides recalling some recent results from \cite{CKT} on the
parametrized family of the transmutation operators we prove (Theorem
\ref{TmainGoursat}) a criterion for a function to be a kernel of a
transmutation operator from the parametrized family as well as a result on the
explicit form of the inverse transmutation operator (Theorem \ref{Th Inverse}%
). In Section \ref{Sect Transmutations for Darboux} we present the
construction of the transmutations for the Darboux transformed operators and
obtain the commutation relations for them. We use the relations to weaken the
conditions on the potential under which the transmutation operators preserve
their main property (Theorem \ref{Th Transmutation C}). The section ends with
several examples of explicitly constructed transmutation kernels. In Section
\ref{Sect Transmutation Dirac} the results of the preceding section are used
in order to obtain the transmutation operator for the one-dimensional Dirac
system with a scalar potential. Finally, Section \ref{Sect Conclusions}
contains some concluding remarks.

\section{Transmutation operators for Sturm-Liouville equations and systems of
recursive integrals\label{Sect Transmutations and Recursive}}

\subsection{Transmutation operators}

\label{SubSectTransmSL} According to the definition given by Levitan
\cite{LevitanInverse}, let $E$ be a linear topological space, $A$ and $B$ be
linear operators: $E\rightarrow E$. Let $E_{1}$ and $E_{2}$ be closed
subspaces of $E$.

\begin{definition}
\label{DefTransmut} A linear invertible operator $T$ defined on the whole $E$
and acting from $E_{1}$ to $E_{2}$ is called a transmutation operator for the
pair of operators $A$ and $B$ if it fulfills the following two conditions.
\end{definition}

1. \textit{Both the operator }$T$\textit{\ and its inverse }$T^{-1}%
$\textit{\ are continuous in }$E$;

2. \textit{The following operator equality is valid}%

\begin{equation}
AT=TB \label{ATTB}%
\end{equation}

\textit{or which is the same}%
\[
A=TBT^{-1}.
\]

Our main interest concerns the situation when $A=-\frac{d^{2}}{dx^{2}}+q(x)$,
$B=-\frac{d^{2}}{dx^{2}}$, \ and $q$ is a continuous complex-valued function.
Hence for our purposes it will be sufficient to consider the functional space
$E=C^{2}[a,b]$ with the topology of uniform convergence. For simplicity we
will assume that the interval is symmetric with respect to the origin, thus
$E=C^{2}[-a,a]$.

An operator of transmutation for such $A$ and $B$ can be realized in the form
(see, e.g., \cite{LevitanInverse} and \cite{Marchenko}) of a Volterra integral
operator
\begin{equation}
Tu(x)=u(x)+\int_{-x}^{x}K(x,t)u(t)dt \label{T}%
\end{equation}
where $K(x,t)=H\big(\frac{x+t}{2},\frac{x-t}{2}\big)$ and $H$ is the unique
solution of the Goursat problem%
\begin{equation}
\frac{\partial^{2}H(u,v)}{\partial u\,\partial v}=q(u+v)H(u,v),
\label{GoursatH1}%
\end{equation}%
\begin{equation}
H(u,0)=\frac{1}{2}\int_{0}^{u}q(s)\,ds,\qquad H(0,v)=0. \label{GoursatH2}%
\end{equation}
If the potential $q$ is continuously differentiable, the kernel $K$ itself is
the solution of the Goursat problem
\begin{equation}
\left(  \frac{\partial^{2}}{\partial x^{2}}-q(x)\right)  K(x,t)=\frac
{\partial^{2}}{\partial t^{2}}K(x,t), \label{Goursat1}%
\end{equation}%
\begin{equation}
K(x,x)=\frac{1}{2}\int_{0}^{x}q(s)\,ds,\qquad K(x,-x)=0. \label{Goursat2}%
\end{equation}
If the potential $q$ is $n$ times continuously differentiable, the kernel
$K(x,t)$ is $n+1$ times continuously differentiable with respect to both
independent variables (see \cite{Marchenko}).

An important property of this transmutation operator consists in the way how
it maps solutions of the equation%
\begin{equation}
v^{\prime\prime}+\omega^{2}v=0 \label{SLomega1}%
\end{equation}
into solutions of the equation%
\begin{equation}
u^{\prime\prime}-q(x)u+\omega^{2}u=0 \label{SLomega2}%
\end{equation}
where $\omega$ is a complex number. Denote by $e_{0}(i\omega,x)$ the solution
of (\ref{SLomega2}) satisfying the initial conditions%
\begin{equation}
e_{0}(i\omega,0)=1\qquad\text{and}\qquad e_{0}^{\prime}(i\omega,0)=i\omega.
\label{initcond}%
\end{equation}
The subindex ``$0$'' indicates that the initial conditions correspond to the
point $x=0$ and the letter ``$e$'' reminds us that the initial values coincide
with the initial values of the function $e^{i\omega x}$.

The transmutation operator (\ref{T}) maps $e^{i\omega x}$ into $e_{0}%
(i\omega,x)$,
\begin{equation}
e_{0}(i\omega,x)=T[e^{i\omega x}] \label{e0=Te}%
\end{equation}
(see \cite[Theorem 1.2.1]{Marchenko}).

Following \cite{Marchenko} we introduce the following notations%
\[
K_{c}(x,t;h)=h+K(x,t)+K(x,-t)+h\int_{t}^{x}\{K(x,\xi)-K(x,-\xi)\}d\xi
\]
where $h$ is a complex number, and
\[
K_{s}(x,t;\infty)=K(x,t)-K(x,-t).
\]

\begin{theorem}
[\cite{Marchenko}]\label{TcTsMapsSolutions} Solutions $c(\omega,x;h)$ and
$s(\omega,x;\infty)$ of equation \eqref{SLomega2} satisfying the initial
conditions
\begin{align}
c(\omega,0;h)  &  =1, & c_{x}^{\prime}(\omega,0;h)  &  =h\label{ICcos}\\
s(\omega,0;\infty)  &  =0, & s_{x}^{\prime}(\omega,0;\infty)  &  =1
\label{ICsin}%
\end{align}
can be represented in the form
\begin{equation}
c(\omega,x;h)=\cos\omega x+\int_{0}^{x}K_{c}(x,t;h)\cos\omega t\,dt
\label{c cos}%
\end{equation}
and
\begin{equation}
s(\omega,x;\infty)=\frac{\sin\omega x}{\omega}+\int_{0}^{x}K_{s}%
(x,t;\infty)\frac{\sin\omega t}{\omega}\,dt. \label{s sin}%
\end{equation}

\end{theorem}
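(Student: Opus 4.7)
The plan is to reduce both formulas to the known mapping property \eqref{e0=Te} of the Volterra transmutation operator $T$, by decomposing $\cos\omega x$ and $\omega^{-1}\sin\omega x$ into exponentials. Replacing $\omega$ by $-\omega$ in \eqref{e0=Te} gives a second mapping $T[e^{-i\omega x}]=e_0(-i\omega,x)$, and from the linearity of $T$ together with the initial conditions \eqref{initcond} one sees that
\[
T[\cos\omega x]=\tfrac{1}{2}\bigl(e_0(i\omega,x)+e_0(-i\omega,x)\bigr)
\]
is the solution of \eqref{SLomega2} with Cauchy data $(1,0)$ at $x=0$, i.e.\ $c(\omega,x;0)$, while
\[
T\!\left[\tfrac{\sin\omega x}{\omega}\right]=\tfrac{1}{2i\omega}\bigl(e_0(i\omega,x)-e_0(-i\omega,x)\bigr)
\]
is the solution with Cauchy data $(0,1)$, i.e.\ $s(\omega,x;\infty)$.

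To obtain \eqref{s sin} I would expand the second identity via \eqref{T} and exploit the parity of $\sin$: since $\omega^{-1}\sin\omega t$ is odd in $t$,
\[
\int_{-x}^{x}K(x,t)\,\tfrac{\sin\omega t}{\omega}\,dt=\int_{0}^{x}\!\bigl[K(x,t)-K(x,-t)\bigr]\tfrac{\sin\omega t}{\omega}\,dt,
\]
and the bracket is precisely $K_s(x,t;\infty)$. The general $c(\omega,x;h)$ is then constructed by superposition: $c(\omega,x;h)=T[\cos\omega x]+h\,s(\omega,x;\infty)$, which has the correct Cauchy data \eqref{ICcos}. Expanding $T[\cos\omega x]$ via \eqref{T} and using the evenness of $\cos$ gives a $\cos$-integral with kernel $K(x,t)+K(x,-t)$; the remaining task is to absorb the $h$-terms into a single $\cos$-integral.

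The one routine but slightly technical step is rewriting the $h$-terms in the $\cos$-form required by \eqref{c cos}. Using $\tfrac{\sin\omega t}{\omega}=\int_{0}^{t}\cos\omega\xi\,d\xi$ one writes
\[
h\,\tfrac{\sin\omega x}{\omega}=h\int_{0}^{x}\cos\omega t\,dt,\qquad
h\!\int_{0}^{x}\!\bigl[K(x,t)-K(x,-t)\bigr]\!\int_{0}^{t}\!\cos\omega\xi\,d\xi\,dt,
\]
and Fubini applied to the second expression (the region is $0\le\xi\le t\le x$) turns it into
\[
h\int_{0}^{x}\!\cos\omega\xi\left(\int_{\xi}^{x}\!\bigl[K(x,t)-K(x,-t)\bigr]dt\right)d\xi.
\]
Collecting everything under a single integral in $\cos\omega t$ and comparing with the definition of $K_{c}(x,t;h)$ yields \eqref{c cos}.

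The chief obstacle is purely bookkeeping: correctly tracking the parity reductions on $(-x,x)$ and performing the Fubini swap so that the $h$-contribution fits into the same $\cos\omega t$-integral as the $T[\cos\omega x]$ part. No new analytic ingredient beyond \eqref{e0=Te} and the structure of \eqref{T} is needed; uniqueness of the Cauchy problem for \eqref{SLomega2} guarantees that the formulas so obtained indeed represent $c(\omega,x;h)$ and $s(\omega,x;\infty)$.
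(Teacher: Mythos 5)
Your argument is correct. Note that the paper itself offers no proof of this statement -- it is imported verbatim from Marchenko's book -- so there is nothing internal to compare against; but your derivation is sound and is essentially the standard one. The chain of reasoning checks out in detail: $T[\cos\omega x]=\tfrac12\bigl(e_0(i\omega,x)+e_0(-i\omega,x)\bigr)$ solves \eqref{SLomega2} with Cauchy data $(1,0)$ because the kernel $K$ of \eqref{T} satisfies $K(0,0)=K(x,-x)=0$, so $T$ preserves both the value and the derivative at the origin; the parity reductions correctly produce the kernels $K(x,t)-K(x,-t)=K_s(x,t;\infty)$ and $K(x,t)+K(x,-t)$ on $(0,x)$; the superposition $c(\omega,x;h)=T[\cos\omega x]+h\,s(\omega,x;\infty)$ has the data \eqref{ICcos} and is pinned down by uniqueness of the Cauchy problem; and the Fubini swap over $0\le\xi\le t\le x$ turns $h\,s(\omega,x;\infty)$ into
\[
h\int_0^x\Bigl(1+\int_t^x\bigl[K(x,s)-K(x,-s)\bigr]\,ds\Bigr)\cos\omega t\,dt,
\]
which, added to the even part, reproduces exactly the definition of $K_c(x,t;h)$ given before the theorem. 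Your approach is also consistent with how the paper itself manipulates these objects later (the decomposition \eqref{TcPe+TsPo} and the termwise identities \eqref{Tc xk}, \eqref{Ts xk} rest on the same parity bookkeeping). The only point worth flagging is the trivial one that \eqref{s sin} is to be read with $\sin\omega x/\omega$ replaced by its limit $x$ when $\omega=0$.
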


Denote by
\[
T_{c}u(x)=u(x)+\int_{0}^{x}K_{c}(x,t;h)u(t)dt
\]
and%
\[
T_{s}u(x)=u(x)+\int_{0}^{x}K_{s}(x,t;\infty)u(t)dt
\]
the corresponding integral operators. As was pointed out in \cite{CKT} they
are not, in general, transmutations on the whole space $C^{2}[-a,a]$.

\subsection{A parametrized family of transmutation operators}

In \cite{CKT} we introduced the parametrized family of operators
$\mathbf{T}_{h}$, $h\in\mathbb{C}$, given by the integral expression
\begin{equation}
\mathbf{T}_{h}u(x)=u(x)+\int_{-x}^{x}\mathbf{K}(x,t;h)u(t)dt \label{Tmain}%
\end{equation}
where
\begin{equation}
\mathbf{K}(x,t;h)=\frac{h}{2}+K(x,t)+\frac{h}{2}\int_{t}^{x}%
\big( K(x,s)-K(x,-s)\big) \,ds. \label{Kmain}%
\end{equation}
They are related to operators $T_{s}$ and $T_{c}$ (with the parameter $h$ in
the kernel of the latter operator) by
\begin{equation}
\mathbf{T}_{h}=T_{c}P_{e}+T_{s}P_{o}, \label{TcPe+TsPo}%
\end{equation}
where $P_{e}f(x)=\big(f(x)+f(-x)\big)/2$ and $P_{o}%
f(x)=\big(f(x)-f(-x)\big)/2$ are projectors to even and odd functions,
respectively. In this subsection we show that the operators $\mathbf{T}_{h}$
are transmutations, summarize their properties and later, in Theorem
\ref{Th Transmutation of Powers} we show how they act on powers of $x$.

Let us notice that $\mathbf{K}(x,t;0)=K(x,t)$ and that the expression
\[
\mathbf{K}(x,t;h)-\mathbf{K}(x,-t;h)=K(x,t)-K(x,-t)+\frac{h}{2}\int_{-t}%
^{t}\left(  K(x,s)-K(x,-s)\right)  \,ds=K(x,t)-K(x,-t)
\]
does not depend on $h$. Thus, it is possible to compute $\mathbf{K}(x,t;h)$
for any $h$ by a given $\mathbf{K}(x,t;h_{1})$ for some particular value
$h_{1}$. We formulate this result as the following statement.

\begin{theorem}
[\cite{CKT}]\label{Kh_and_Kh1} The integral kernels $\mathbf{K}(x,t;h)$ and
$\mathbf{K}(x,t;h_{1})$ are related by the expression
\begin{equation}
\mathbf{K}(x,t;h)=\frac{h-h_{1}}{2}+\mathbf{K}(x,t;h_{1})+\frac{h-h_{1}}%
{2}\int_{t}^{x}\big( \mathbf{K}(x,s;h_{1})-\mathbf{K}(x,-s;h_{1})\big) \,ds.
\label{KmainChangeOfH}%
\end{equation}

\end{theorem}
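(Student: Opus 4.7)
The plan is a direct computation starting from the defining formula (\ref{Kmain}) for the kernel, exploiting the $h$-independent observation already recorded just before the theorem statement.

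First I would write out $\mathbf{K}(x,t;h)$ and $\mathbf{K}(x,t;h_{1})$ using (\ref{Kmain}) and subtract them. Since $K(x,t)$ itself does not depend on $h$ at all, this subtraction kills the $K(x,t)$ term and yields
\[
\mathbf{K}(x,t;h)-\mathbf{K}(x,t;h_{1})=\frac{h-h_{1}}{2}+\frac{h-h_{1}}{2}\int_{t}^{x}\big(K(x,s)-K(x,-s)\big)\,ds.
\]
So the proof reduces to showing that the integrand $K(x,s)-K(x,-s)$ can be replaced by $\mathbf{K}(x,s;h_{1})-\mathbf{K}(x,-s;h_{1})$.

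That replacement is exactly the content of the $h$-independence remark recalled just before the theorem: applying (\ref{Kmain}) at $t$ and at $-t$, the additive constant $h/2$ cancels, and the integral $\frac{h}{2}\int_{t}^{x}-\frac{h}{2}\int_{-t}^{x}$ collapses to $\frac{h}{2}\int_{-t}^{t}\big(K(x,s)-K(x,-s)\big)\,ds$, which vanishes because the integrand is odd in $s$. Hence $\mathbf{K}(x,s;h_{1})-\mathbf{K}(x,-s;h_{1})=K(x,s)-K(x,-s)$ for every $h_{1}$, and substituting this identity into the boxed display above produces exactly (\ref{KmainChangeOfH}).

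There is no real obstacle here: the whole argument is one line of algebra plus the odd-in-$s$ cancellation. The only thing to be mildly careful about is the orientation of the integrals when rewriting $\int_{-t}^{x}$ as $\int_{t}^{x}+\int_{-t}^{t}$, so that the sign in the odd-function cancellation comes out right; once that bookkeeping is done, the identity follows immediately.
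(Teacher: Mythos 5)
Your proposal is correct and is essentially the argument the paper itself gives: the paper proves the theorem via the observation, stated immediately before it, that $\mathbf{K}(x,t;h)-\mathbf{K}(x,-t;h)=K(x,t)-K(x,-t)$ is independent of $h$ because the extra term $\frac{h}{2}\int_{-t}^{t}\big(K(x,s)-K(x,-s)\big)\,ds$ vanishes by oddness of the integrand, after which \eqref{KmainChangeOfH} follows by subtracting the two instances of \eqref{Kmain}. Your bookkeeping of the orientation of $\int_{t}^{x}-\int_{-t}^{x}$ is immaterial since that integral is zero in either orientation, so nothing further is needed.
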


Using \eqref{e0=Te} let us check how the operators $\mathbf{T}_{h}$ act on
solutions of (\ref{SLomega1}).

\begin{proposition}
\label{ThMapsSolutions} The operator $\mathbf{T}_{h}$ maps a solution $v$ of
an equation $v^{\prime\prime}+\omega^{2}v=0$, where $\omega$ is a complex
number, into the solution $u$ of the equation $u^{\prime\prime}-q(x)u+\omega
^{2}u=0$ with the following correspondence of initial values
\[
u(0)=v(0),\qquad u^{\prime}(0)=v^{\prime}(0)+hv(0).
\]

\end{proposition}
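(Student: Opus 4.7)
The strategy is to reduce the claim to Theorem \ref{TcTsMapsSolutions} via the decomposition $\mathbf{T}_h = T_c P_e + T_s P_o$ from \eqref{TcPe+TsPo}. Given any solution $v$ of \eqref{SLomega1}, I would first express it in terms of its initial data at the origin,
\[
v(x) = v(0)\cos\omega x + v'(0)\,\frac{\sin\omega x}{\omega},
\]
where $\sin(\omega x)/\omega$ is interpreted as $x$ in the degenerate case $\omega=0$. Since $\cos\omega x$ is even and $\sin(\omega x)/\omega$ is odd, we have $P_e v(x)=v(0)\cos\omega x$ and $P_o v(x)=v'(0)\sin(\omega x)/\omega$.

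Applying \eqref{TcPe+TsPo} then gives
\[
\mathbf{T}_h v = v(0)\, T_c[\cos\omega x] + v'(0)\, T_s\!\left[\frac{\sin\omega x}{\omega}\right],
\]
and by Theorem \ref{TcTsMapsSolutions} these two images are exactly $c(\omega,x;h)$ and $s(\omega,x;\infty)$, both of which solve \eqref{SLomega2} with the initial conditions \eqref{ICcos} and \eqref{ICsin}, respectively. Hence $u:=\mathbf{T}_h v$ is a linear combination of two solutions of \eqref{SLomega2} and is therefore itself a solution of \eqref{SLomega2}.

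To conclude, I would read off the initial values of $u$ using \eqref{ICcos} and \eqref{ICsin}: $u(0) = v(0)\cdot 1 + v'(0)\cdot 0 = v(0)$ and $u'(0) = v(0)\cdot h + v'(0)\cdot 1 = v'(0) + h v(0)$, which is exactly what the proposition asserts. The argument is essentially immediate once \eqref{TcPe+TsPo} and Theorem \ref{TcTsMapsSolutions} are at hand; there is no real obstacle. The only minor subtlety is the degenerate case $\omega=0$, handled either by interpreting $\sin(\omega x)/\omega$ as $x$ and invoking Theorem \ref{TcTsMapsSolutions} directly, or by analytic continuation in the parameter $\omega$.
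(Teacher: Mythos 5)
Your proof is correct, and for the first assertion (that $\mathbf{T}_h$ sends solutions of \eqref{SLomega1} to solutions of \eqref{SLomega2}) it is exactly the paper's argument, merely written out in full: the paper also invokes \eqref{TcPe+TsPo} together with Theorem \ref{TcTsMapsSolutions}, leaving implicit the decomposition $v=v(0)\cos\omega x+v'(0)\sin(\omega x)/\omega$ that you make explicit. Where you diverge is in the computation of the initial values. You read them off from \eqref{ICcos} and \eqref{ICsin} applied to the representation $\mathbf{T}_h v=v(0)\,c(\omega,x;h)+v'(0)\,s(\omega,x;\infty)$, which is clean but only applies when $v$ solves \eqref{SLomega1}. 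The paper instead differentiates the Volterra representation \eqref{Tmain} directly, obtaining
\[
\big(\mathbf{T}_{h}v\big)^{\prime}(0)=v^{\prime}(0)+\mathbf{K}(0,0;h)v(0)+\mathbf{K}(0,0;h)v(0)=v^{\prime}(0)+hv(0)
\]
from the diagonal values $\mathbf{K}(x,\pm x;h)=h/2$ at $x=0$. That extra generality is not cosmetic: the paper records in the Remark immediately following the proposition that formulas \eqref{Thu0} and \eqref{Thu'0} hold for \emph{every} $v\in C^{1}[-a,a]$, a fact used later (e.g.\ in identifying the constant $C_1$ in the proof of Theorem \ref{Th Inverse}), and your argument does not yield it. Your handling of the degenerate case $\omega=0$ via the interpretation $\sin(\omega x)/\omega\mapsto x$ is fine and is a point the paper does not bother to flag.
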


\begin{proof}
It follows from \eqref{TcPe+TsPo} and Theorem \ref{TcTsMapsSolutions} that the
operator $\mathbf{T}_{h}$ indeed maps a solution of $v^{\prime\prime}%
+\omega^{2}v=0$ into a solution of $u^{\prime\prime}-q(x)u+\omega^{2}u=0$. It
is clear from the definition \eqref{Tmain} that
\begin{equation}
\label{Thu0}\mathbf{T}_{h}v(0)=v(0).
\end{equation}
As for the derivative, we have
\[
\big(\mathbf{T}_{h}v\big)^{\prime}(x)=v^{\prime}(x)+\int_{-x}^{x}
\mathbf{K}^{\prime}_{x}(x,t;h)v(t)\,dt+\mathbf{K}(x,x;h)v(x)+\mathbf{K}%
(x,-x;h)v(-x),
\]
therefore
\begin{equation}
\label{Thu'0}\big(\mathbf{T}_{h}v\big)^{\prime}(0)=v^{\prime}%
(0)+K(0,0;h)u(0)+K(x,-x;h)v(0)=v^{\prime}(0)+hv(0).
\end{equation}
\end{proof}

\begin{remark}
As can be seen from the proof, formulas \eqref{Thu0} and \eqref{Thu'0} are
valid for any function $v\in C^{1}[-a,a]$.
\end{remark}

We know that the kernel of the transmutation operator $T$ is related to the
solution of the Goursat problem \eqref{GoursatH1}--\eqref{GoursatH2}. We show
that a similar result holds for the operators $\mathbf{T}_{h}$.

\begin{theorem}
\label{TmainGoursat} In order for the function $K(x,t;h)$ to be the kernel of
a transmutation operator acting as described in Proposition
\ref{ThMapsSolutions}, it is necessary and sufficient that
$H(u,v;h):=K(u+v,u-v;h)$ be a solution of the Goursat problem
\begin{equation}
\frac{\partial^{2}H(u,v;h)}{\partial u\,\partial v}=q(u+v)H(u,v;h),
\label{GoursatTh1}%
\end{equation}%
\begin{equation}
H(u,0;h)=\frac{h}{2}+\frac{1}{2}\int_{0}^{u}q(s)\,ds,\qquad H(0,v;h)=\frac
{h}{2}. \label{GoursatTh2}%
\end{equation}
If the potential $q$ is continuously differentiable, the function $K(x,t;h)$
itself should be the solution of the Goursat problem
\begin{equation}
\left(  \frac{\partial^{2}}{\partial x^{2}}-q(x)\right)  K(x,t;h)=\frac
{\partial^{2}}{\partial t^{2}}K(x,t;h), \label{GoursatTk1}%
\end{equation}%
\begin{equation}
K(x,x;h)=\frac{h}{2}+\frac{1}{2}\int_{0}^{x}q(s)\,ds,\qquad K(x,-x;h)=\frac
{h}{2}. \label{GoursatTk2}%
\end{equation}

\end{theorem}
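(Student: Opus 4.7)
The overall strategy is to reduce the theorem to the classical case $h=0$, which is already recalled at \eqref{GoursatH1}--\eqref{GoursatH2} and \eqref{Goursat1}--\eqref{Goursat2}. The key observation is that formula \eqref{Kmain}, equivalently the $h_1=0$ instance of Theorem \ref{Kh_and_Kh1}, sets up an invertible substitution $K(x,t)\longleftrightarrow K(x,t;h)$. I will show this substitution sends the classical Goursat problem to \eqref{GoursatTh1}--\eqref{GoursatTh2} (respectively \eqref{GoursatTk1}--\eqref{GoursatTk2} in the smooth case), and simultaneously relates classical transmutations to transmutations acting as in Proposition \ref{ThMapsSolutions}.

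For sufficiency, assume $H(u,v;h)=K(u+v,u-v;h)$ solves \eqref{GoursatTh1}--\eqref{GoursatTh2}. Invert \eqref{Kmain} to define
\[
K(x,t):=K(x,t;h)-\tfrac{h}{2}-\tfrac{h}{2}\int_t^x\bigl(K(x,s;h)-K(x,-s;h)\bigr)\,ds,
\]
and verify directly, using the boundary conditions on $K(x,t;h)$ and the PDE, that $K(x,t)$ satisfies the classical Goursat problem. The classical theory then produces a transmutation operator $T$ with kernel $K(x,t)$, and the parametrized operator $\mathbf{T}_h$ built from $K(x,t)$ via \eqref{Tmain}--\eqref{Kmain} acts as in Proposition \ref{ThMapsSolutions}; by construction its kernel is the given $K(x,t;h)$. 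Necessity runs the same argument in reverse: given such an operator with kernel $K(x,t;h)$, the above inversion produces a classical transmutation $T$ (identified with the standard one by uniqueness of solutions to \eqref{SLomega2}), whose kernel satisfies the classical Goursat problem, whence \eqref{GoursatTh1}--\eqref{GoursatTh2} follow by the same substitution. If $q\in C^1$ then $K$ is $C^2$, so $K(x,t;h)$ is $C^2$ and the identity $\partial_x^2-\partial_t^2=\partial_u\partial_v$ together with $t=\pm x\Leftrightarrow v=0,\,u=0$ yields the smooth formulation \eqref{GoursatTk1}--\eqref{GoursatTk2}.

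The main obstacle is the PDE computation that makes the substitution work. Writing $K(x,t;h)-K(x,t)=\tfrac{h}{2}+\tfrac{h}{2}\,I(x,t)$ with $I(x,t):=\int_t^x\bigl(K(x,s)-K(x,-s)\bigr)\,ds$, one must show that the correction satisfies $(\partial_x^2-\partial_t^2)[\,\cdot\,]-q(x)[\,\cdot\,]=0$, which reduces to the identity $I_{xx}-qI-I_{tt}=q(x)$. Its verification requires Leibniz differentiation under the integral (boundary terms appear at $s=x$), use of the PDE for $K$ inside the integral, and cancellation of edge contributions via the Goursat relations $K(x,x)=\tfrac12\int_0^x q$, $K(x,-x)=0$ and their first-derivative consequences $K_t(x,x)+K_x(x,x)=\tfrac12 q(x)$ and $K_x(x,-x)=K_t(x,-x)$. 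In characteristic coordinates the same identity emerges most transparently after the substitution $\tau=v-\sigma$ inside the integral expression for the correction, which aligns the arguments so that $\partial_1\partial_2 H_0=q\,H_0$ applies directly and no boundary residue remains.
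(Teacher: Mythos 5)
Your proposal is correct, and its computational core coincides with the paper's: both rest on the decomposition $\mathbf{K}(x,t;h)=K(x,t)+\tfrac{h}{2}+\tfrac{h}{2}\int_t^x\bigl(K(x,s)-K(x,-s)\bigr)ds$, the same Leibniz differentiations, and the same edge relations $K_x(x,x)+K_t(x,x)=\tfrac12 q(x)$, $K_x(x,-x)=K_t(x,-x)$. The organization differs in two respects. First, you run the reduction backwards: you invert \eqref{Kmain} (legitimately, via Theorem \ref{Kh_and_Kh1} and the $h$-independence of the odd part of the kernel) and land on the classical Goursat problem \eqref{GoursatH1}--\eqref{GoursatH2}, invoking its well-posedness; the paper instead verifies \eqref{GoursatTh1}--\eqref{GoursatTh2} for the constructed kernel directly and settles necessity by the density argument (any operator acting as in Proposition \ref{ThMapsSolutions} coincides with $\mathbf{T}_h$ on linear combinations of solutions, hence everywhere, hence has the same kernel) --- your ``reverse'' necessity needs the extra, though routine, step of checking that the inverted-kernel operator has the $h=0$ initial-value behaviour. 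Second, and more substantively, for merely continuous $q$ the paper approximates by $C^1$ potentials and passes to the limit in the integral equation \eqref{IntegralEqGoursat}, whereas you propose to compute directly in characteristic coordinates; this does work (only $\partial_u\partial_v H=qH$ and first derivatives along the characteristics are needed, all available for continuous $q$) and is arguably cleaner, but your claim that ``no boundary residue remains'' is not accurate: after the substitution, differentiating the endpoint value $H(u+v,0)=\tfrac12\int_0^{u+v}q$ produces the residue $\tfrac12 q(u+v)$, and this term is essential --- it is exactly what reproduces $q(u+v)\cdot\tfrac{h}{2}$ so that the additive constant $\tfrac{h}{2}$ in the kernel is matched. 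With that correction your argument closes, and it buys a proof of the continuous case without the approximation machinery.
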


\begin{proof}
If some transmutation operator $T_{h}$ acts as described in Proposition
\ref{ThMapsSolutions}, then on a common domain of definition it coincides with
the constructed operator $\mathbf{T}_{h}$, because both operators are
continuous and act identically on the dense subset of functions, namely, on
linear combinations of solutions of equations (\ref{SLomega1}). Hence it is
sufficient to verify that the kernel $\mathbf{K}(x,t;h)$ satisfies conditions
of the theorem.

Suppose $q\in C^{1}[-a,a]$. Then the kernel $K(x,t)$ is twice continuously
differentiable and we can check \eqref{GoursatTk1} directly. In the following
calculation $K_{x}^{\prime}(x,\pm x)$ means $K_{x}^{\prime}(x,t)\big|_{t=\pm
x}$ and $K_{t}^{\prime}(x,\pm x)$ means $K_{t}^{\prime}(x,t)\big|_{t=\pm x}$.
\begin{multline*}
\left(  \partial_{x}^{2}-\partial_{t}^{2}\right)  \mathbf{K}(x,t;h)=\left(
\partial_{x}^{2}-\partial_{t}^{2}\right)  \left[  \frac{h}{2}+K(x,t)+\frac
{h}{2}\int_{t}^{x}\left(  K(x,s)-K(x,-s)\right)  \,ds\right]  =\\
=q(x)K(x,t)+\frac{h}{2}\partial_{x}\left(  \int_{t}^{x}\big(K_{x}^{\prime
}(x,s)-K_{x}^{\prime}(x,-s)\big)\,ds+K(x,x)-K(x,-x)\right)  +\\
+\frac{h}{2}\partial_{t}\big(K(x,t)-K(x,-t)\big)=q(x)K(x,t)+\frac{h}{4}q(x)+\\
+\frac{h}{2}\bigg(\int_{t}^{x}\big(K_{xx}^{\prime\prime}(x,s)-K_{xx}%
^{\prime\prime}(x,-s)\big)\,ds+K_{x}^{\prime}(x,x)-K_{x}^{\prime}%
(x,-x)+K_{t}^{\prime}(x,t)+K_{t}^{\prime}(x,-t)\bigg)=\\
=q(x)K(x,t)+\frac{h}{4}q(x)+\frac{h}{2}\bigg(\int_{t}^{x}\big(K_{ss}%
^{\prime\prime}(x,s)-K_{ss}^{\prime\prime}%
(x,-s)+q(x)(K(x,s)-K(x,-s))\big)\,ds+\\
+K_{x}^{\prime}(x,x)-K_{x}^{\prime}(x,-x)+K_{t}^{\prime}(x,t)+K_{t}^{\prime
}(x,-t)\bigg)=q(x)\bigg(K(x,t)+\frac{h}{4}+\int_{t}^{x}%
\big(K(x,s)-K(x,-s)\big)\,ds\bigg)+\\
+\frac{h}{2}\big(K_{t}^{\prime}(x,x)-K_{t}^{\prime}(x,t)+K_{t}^{\prime
}(x,-x)-K_{t}^{\prime}(x,-t)+K_{x}^{\prime}(x,x)-K_{x}^{\prime}(x,-x)+K_{t}%
^{\prime}(x,t)+K_{t}^{\prime}(x,-t)\big)=\\
=q(x)\bigg(K(x,t)+\frac{h}{2}+\int_{t}^{x}%
\big(K(x,s)-K(x,-s)\big)\,ds\bigg)=q(x)\mathbf{K}(x,t;h),
\end{multline*}
where we have used \eqref{Goursat1}, \eqref{Goursat2} and the equalities
$K_{x}^{\prime}(x,x)+K_{t}^{\prime}(x,x)=\partial_{x}(K(x,x))=1/2q(x)$,
$K_{x}^{\prime}(x,-x)-K_{t}^{\prime}(x,-x)=\partial_{x}(K(x,-x))=0$.
Equalities \eqref{GoursatTk2} are valid by the definition \eqref{Kmain} of
$\mathbf{K}(x,t;h)$. Clearly, the function $\mathbf{H}(u,v;h)=\mathbf{K}%
(u+v,u-v;h)$ is also the solution of the Goursat problem \eqref{GoursatTh1}--\eqref{GoursatTh2}.

Suppose now that $q\in C[-a,a]$. Consider a sequence of functions
$\{q_{n}\}_{n\in\mathbb{N}}$, where $q_{n}\in C^{1}[-a,a]$, $n\in\mathbb{N}$
and $q_{n}\rightarrow q,\ n\rightarrow\infty$ uniformly on $[-a,a]$. Denote by
$H_{n}(u,v)$ the solutions of Goursat problems
\eqref{GoursatH1}--\eqref{GoursatH2} with the potentials $q_{n}$, and by
$H(u,v)$ the solution with the potential $q$. It is known (see, e.g.,
\cite{Marchenko}), that $H_{n}(u,v)\rightarrow H(u,v),\ n\rightarrow\infty$
uniformly. The same is valid for the functions $K_{n}(x,t)=H_{n}%
\big(\frac{x+t}{2},\frac{x-t}{2}\big)$ and $K(x,t)$. Denote by $\mathbf{K}%
_{n}(x,t;h)$ the integral kernels constructed from $K_{n}(x,t)$ by means of
\eqref{Kmain} and let $\mathbf{H}_{n}(u,v;h)=\mathbf{K}_{n}(u+v,u-v;h)$. Then
$\mathbf{H}_{n}(u,v;h)\rightarrow\mathbf{H}(u,v;h)=\mathbf{K}%
(u+v,u-v;h),\ n\rightarrow\infty$ uniformly. Similarly to \cite{Marchenko,
Vladimirov}, the function $\mathbf{H}_{n}(u,v;h)$ is the solution of the
Goursat problem \eqref{GoursatTh1}--\eqref{GoursatTh2} with the potential
$q_{n}$ if and only if $\mathbf{H}_{n}(u,v;h)$ satisfies the integral
equation
\begin{equation}
\mathbf{H}_{n}(u,v;h)=\frac{h}{2}+\frac{1}{2}\int_{0}^{u}q_{n}(y)\,dy+\int
_{0}^{u}d\alpha\,\int_{0}^{v}q_{n}(\alpha+\beta)\mathbf{H}_{n}(\alpha
,\beta;h)\,d\beta. \label{IntegralEqGoursat}%
\end{equation}
From the first part of the proof we know that the functions $\mathbf{H}%
_{n}(u,v;h)$ satisfy \eqref{IntegralEqGoursat}. Passing to the limit when
$n\rightarrow\infty$ in \eqref{IntegralEqGoursat}, we obtain that
$\mathbf{H}(u,v;h)$ satisfies \eqref{IntegralEqGoursat} with the potential $q$
and hence satisfies the Goursat problem \eqref{GoursatTh1}--\eqref{GoursatTh2}.

For the second part of the theorem, suppose that a function $H(u,v;h)$
satisfies the Goursat problem \eqref{GoursatTh1}--\eqref{GoursatTh2}. Consider
the Goursat problem \eqref{GoursatH1}--\eqref{GoursatH2}. It is well posed,
and from its solution using \eqref{Kmain} we can construct the kernel
$\mathbf{K}(x,t;h)$. From the first part of the proof we have that the
function $\mathbf{H}(u,v;h)=\mathbf{K}(u+v,u-v;h)$ satisfies the Goursat
problem \eqref{GoursatTh1}--\eqref{GoursatTh2}, and from the uniqueness of the
solution it follows that $H=\mathbf{H}$ and $H$ gives us the transmutation
operator acting as described in Proposition \ref{ThMapsSolutions}.
\end{proof}

\begin{example}
\label{modelExample} Consider a function $k(x,t)=\frac{t-1}{2(x+1)}$ (later,
in Subsection \ref{subsectExamples} it is explained how it can be obtained).
We have
\begin{gather*}
(\partial_{x}^{2}-\partial_{t}^{2})k(x,t)=\frac{t-1}{(x+1)^{3}}=\frac
{2}{(x+1)^{2}}\cdot\frac{t-1}{2(x+1)},\\
k(x,-x)=\frac{-x-1}{2(x+1)}=-\frac{1}{2}\qquad\text{and}\qquad k(x,x)=\frac
{x-1}{2(x+1)}=-\frac{1}{2}+\frac{1}{2}\int_{0}^{x}\frac{2}{(s+1)^{2}}\,ds,
\end{gather*}
thus the function $k(x,t)$ satisfies the Goursat problem
\eqref{GoursatTk1}--\eqref{GoursatTk2} with $q(x)=2/(x+1)^{2}$ and $h=-1$ and
by Theorem \ref{TmainGoursat} is the kernel of the transmutation operator
$\mathbf{T}_{-1}$.

Consider two solutions of the equation $u^{\prime\prime}=0$, $u_{1}=1$ and
$u_{2}=x$. We have that their images
\begin{align*}
v_{1}=\mathbf{T}_{-1}u_{1}  &  =1+\int_{-x}^{x}\frac{t-1}{2(x+1)}\,dt=\frac
{1}{x+1},\\
v_{2}=\mathbf{T}_{-1}u_{2}  &  =x+\int_{-x}^{x}\frac{(t-1)t}{2(x+1)}%
\,dt=\frac{x^{3}+3x^{2}+3x}{3(x+1)},
\end{align*}
are solutions of the equation $v^{\prime\prime}-2/(x+1)^{2}v=0$ with the
initial values in agreement with Proposition \ref{ThMapsSolutions}. For the
solution $u_{3}=\cos x$ of the equation $u^{\prime\prime}=-u$ we get that
\[
v_{3}=\mathbf{T}_{-1}u_{3}=\cos x+\int_{-x}^{x}\frac{(t-1)\cos t}%
{2(x+1)}\,dt=\cos x-\frac{\sin x}{x+1},
\]
is a solution of the equation $v^{\prime\prime}-\frac{2}{(x+1)^{2}}v=-v$.
\end{example}

For the case $q\in C^{1}[-a,a]$ the Volterra-type integral operator \eqref{T}
is a transmutation in the sense of Definition \ref{DefTransmut} on the space
$C^{2}[-a,a]$ if and only if the integral kernel $K(x,t)$ satisfies the
Goursat problem
\begin{gather*}
\left(  \frac{\partial^{2}}{\partial x^{2}}-q(x)\right)  K(x,t)=\frac
{\partial^{2}}{\partial t^{2}}K(x,t),\\
K(x,x)=C+\frac{1}{2}\int_{0}^{x}q(s)\,ds,\qquad K(x,-x)=C,
\end{gather*}
where $C$ is a constant (see \cite{Lions57, Trimeche}). The proof of this fact
consists in differentiating of the integral and integrating by parts, similar
to the proof of Theorem \ref{TmainGoursat}. Due to Theorem \ref{TmainGoursat}
we see that $\mathbf{T}_{h}$ is indeed a transmutation.

\begin{theorem}
\label{Th Transmutation} Let $q\in C^{1}[-a,a]$. Then the operator
$\mathbf{T}_{h}$ given by \eqref{Tmain} satisfies the equality
\begin{equation}
\left(  -\frac{d^{2}}{dx^{2}}+q(x)\right)  \mathbf{T}_{h}[u]=\mathbf{T}%
_{h}\left[  -\frac{d^{2}}{dx^{2}}(u)\right]  \label{ThTransm}%
\end{equation}
for any $u\in C^{2}[-a,a]$.
\end{theorem}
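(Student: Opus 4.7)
The plan is to reduce the claim to facts already established. The paragraph immediately preceding the theorem recalls the classical Lions--Triméche criterion: a Volterra operator of the form $Tu(x) = u(x) + \int_{-x}^{x} K(x,t)\,u(t)\,dt$ satisfies the transmutation identity $\bigl(-\tfrac{d^{2}}{dx^{2}} + q(x)\bigr)T u = T\bigl(-\tfrac{d^{2}}{dx^{2}} u\bigr)$ for every $u \in C^{2}[-a,a]$ if and only if the kernel $K$ solves the Goursat problem \eqref{Goursat1} together with boundary conditions
\[
K(x,x) = C + \tfrac{1}{2}\int_{0}^{x} q(s)\,ds, \qquad K(x,-x) = C,
\]
for some constant $C$. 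So I would simply verify these two conditions for $\mathbf{K}(x,t;h)$ with the choice $C = h/2$.

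Both checks are furnished by Theorem \ref{TmainGoursat}. Since $q \in C^{1}[-a,a]$, that theorem shows that $\mathbf{K}(x,t;h)$ satisfies \eqref{GoursatTk1}, which is identically the PDE \eqref{Goursat1}, and satisfies the boundary identities \eqref{GoursatTk2}, which read $\mathbf{K}(x,x;h) = h/2 + \tfrac{1}{2}\int_0^x q(s)\,ds$ and $\mathbf{K}(x,-x;h) = h/2$. These match the Lions--Triméche conditions with $C = h/2$, and \eqref{ThTransm} then follows at once.

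If one prefers a self-contained argument instead of invoking the cited characterization, the calculation is a direct Leibniz-rule expansion: differentiate $\mathbf{T}_h u$ twice in $x$, picking up the integral of $\partial_x^{2}\mathbf{K}$, the boundary terms from the moving endpoints $t = \pm x$, and the boundary terms from moving $\partial_x \mathbf{K}$ at $t = \pm x$. Using \eqref{GoursatTk1} one replaces $\partial_x^{2}\mathbf{K}$ by $\partial_t^{2}\mathbf{K} + q(x)\mathbf{K}$, and then integrates by parts twice in $t$ to transfer the $t$-derivatives onto $u$. The interior part contributes precisely $q(x)\mathbf{T}_h u - \mathbf{T}_h u''$, while all boundary contributions have to cancel.

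The only nontrivial step is this boundary cancellation; however, exactly the relations needed for it, namely $\partial_x \mathbf{K}(x,x;h) + \partial_t \mathbf{K}(x,x;h) = \tfrac{1}{2} q(x)$ and $\partial_x \mathbf{K}(x,-x;h) - \partial_t \mathbf{K}(x,-x;h) = 0$, are obtained by differentiating the boundary identities in \eqref{GoursatTk2} along the characteristics $t = \pm x$, and they were already used in the computation carried out inside the proof of Theorem \ref{TmainGoursat}. So the main obstacle of the direct approach, namely the algebra of boundary contributions, has in fact been discharged in that earlier proof, and the argument reduces to a short citation of Theorem \ref{TmainGoursat} combined with the Lions--Triméche criterion.
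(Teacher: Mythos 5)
Your proposal is correct and is essentially the paper's own argument: the paper proves Theorem \ref{Th Transmutation} precisely by citing the Lions--Trim\`eche criterion recalled in the preceding paragraph and observing that Theorem \ref{TmainGoursat} shows $\mathbf{K}(x,t;h)$ satisfies \eqref{GoursatTk1}--\eqref{GoursatTk2}, i.e.\ the criterion with $C=h/2$. Your sketch of the direct Leibniz-rule computation is also consistent with the paper's remark that the criterion itself is proved by differentiating the integral and integrating by parts as in the proof of Theorem \ref{TmainGoursat}.
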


\begin{remark}
This theorem was proved in \cite{CKT} under the additional assumption that
there exists a nonvanishing solution $f$ of $f^{\prime\prime}-qf=0$ on
$[-a,a]$. Later, in Theorem \ref{Th Transmutation C} we show that such
additional assumption allows one to weaken the requirement on the potential
$q$ to $q\in C[-a,a]$ keeping valid equality (\ref{ThTransm}).
\end{remark}

As can be seen from \eqref{Tmain}, to define the transmutation operator we
have to know its integral kernel in the region $-a<x<a$, $|t|<|x|$. The
corresponding function $H$ should be defined in two domains, upper-right and
lower-left triangles $\Pi_{1}:\ u>0,v>0,u+v<a$ and $\Pi_{3}%
:\ u<0,v<0,|u|+|v|<a$ meanwhile the Goursat problem
\eqref{GoursatTh1}--\eqref{GoursatTh2} may be considered in four domains,
triangles $\Pi_{1}$ and $\Pi_{3}$ and triangles $\Pi_{2}:\ u<0,v>0,|u|+|v|<a$,
$\Pi_{4}:\ u>0,v<0,|u|+|v|<a$. The potential $q$ is continuous in each of
these domains, boundary conditions \eqref{GoursatTh2} are continuously
differentiable, therefore the corresponding solutions of the Goursat problems
(see \cite{Vladimirov}) belong to $C^{1}(\bar{\Pi}_{i}),\ i=1\ldots4 $. Hence
the function $\mathbf{H}(u,v;h)$ is defined in the combined domain
$|u|+|v|\leq a$ and belongs to the class $C^{1}$ on it. Consequently, the
function $\mathbf{K}(x,t;h)=\mathbf{H}\big(\frac{x+t}{2},\frac{x-t}%
{2};h\big)
$ is defined on the domain $\bar{\Pi}:\ -a\leq x\leq a,-a\leq t\leq a$ and is
continuously differentiable there. For the rest of this paper we assume that
the kernel $\mathbf{K}$ is defined on this larger domain $\bar{\Pi}$.

Let us find the inverse operator $\mathbf{T}_{h}^{-1}$. Since $\mathbf{T}_{h}
$ is the Volterra integral operator its inverse (see \cite{Lions57}) is again
a Volterra integral operator
\[
\mathbf{T}_{h}^{-1}u(x)=u(x)+\int_{-x}^{x}L(x,t;h)u(t)\,dt,
\]
where $L(x,t;h)$ satisfies (for $q\in C^{1}[-a,a]$) the Goursat problem
\begin{equation}
\frac{\partial^{2}}{\partial x^{2}}L(x,t;h)-\left(  \frac{\partial^{2}%
}{\partial t^{2}}-q(t)\right)  L(x,t;h)=0, \label{GoursatL1}%
\end{equation}%
\begin{equation}
L(x,x;h)=C_{1}-\frac{1}{2}\int_{0}^{x}q(s)\,ds,\qquad L(x,-x;h)=C_{1}.
\label{GoursatL2}%
\end{equation}
To determine the value of the constant $C_{1}$ we use Proposition
\ref{ThMapsSolutions}. Namely, we observe that the integral operator with the
kernel $L$ satisfying the Goursat problem \eqref{GoursatL1}--\eqref{GoursatL2}
gives us the following correspondence of initial values
\[
v(0)=u(0),\qquad v^{\prime}(0)=u^{\prime}(0)+2C_{1}u(0),
\]
where $v=\mathbf{T}_{h}^{-1}u$. Therefore for the operator to be the inverse
of the operator $\mathbf{T}_{h}$ we should take $C_{1}=-h/2$. Comparing the
Goursat problem \eqref{GoursatL1}--\eqref{GoursatL2} with $C_{1}=-h/2$ with
\eqref{GoursatTk1}--\eqref{GoursatTk2} we conclude that $L(x,t;h):=-\mathbf{K}%
(t,x;h)$ is the unique solution of \eqref{GoursatL1}--\eqref{GoursatL2} and
hence it is the kernel of the inverse operator. The assumption $q\in
C^{1}[-a,a]$ is not essential and can be easily overcome by considering an
approximating sequence of continuously differentiable potentials $q_{n}$.
Thus, the following statement is proved.

\begin{theorem}
\label{Th Inverse}The inverse operator $\mathbf{T}_{h}^{-1}$ can be
represented as the Volterra integral operator
\begin{equation}
\mathbf{T}_{h}^{-1}u(x)=u(x)-\int_{-x}^{x}\mathbf{K}(t,x;h)u(t)\,dt.
\label{Tinverse}%
\end{equation}

\end{theorem}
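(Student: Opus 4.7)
The plan is to exploit the classical fact that the inverse of a Volterra operator of the form $I+\mathcal{V}$ with continuous kernel is itself a Volterra operator with continuous resolvent kernel. I would first suppose $q\in C^{1}[-a,a]$, so that $\mathbf{K}(x,t;h)$ is $C^{2}$, and seek the inverse in the form $\mathbf{T}_{h}^{-1}u(x)=u(x)+\int_{-x}^{x}L(x,t;h)u(t)\,dt$ with an unknown kernel $L$.

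The first main step is to derive the Goursat problem satisfied by $L$. Since $\mathbf{T}_{h}$ intertwines $-d^{2}/dx^{2}$ with $-d^{2}/dx^{2}+q(x)$ on $C^{2}[-a,a]$, its inverse must intertwine the two operators in the opposite direction. Substituting the Volterra representation of $\mathbf{T}_{h}^{-1}$ into this intertwining identity and applying the same differentiation under the integral sign and integration by parts as in the proof of Theorem \ref{TmainGoursat}, but with the roles of $A$ and $B$ exchanged, produces \eqref{GoursatL1} together with characteristic data of the form $L(x,x;h)=C_{1}-\tfrac{1}{2}\int_{0}^{x}q(s)\,ds$ and $L(x,-x;h)=C_{1}$ for some constant $C_{1}$.

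The constant is pinned down by matching initial values at $x=0$. A direct computation (in the spirit of the proof of Proposition \ref{ThMapsSolutions}) shows that the integral operator with such $L$ sends the pair $(u(0),u'(0))$ to $(u(0),u'(0)+2C_{1}u(0))$; on the other hand, by Proposition \ref{ThMapsSolutions} the inverse of $\mathbf{T}_{h}$ must send $(u(0),u'(0))$ to $(u(0),u'(0)-hu(0))$. This forces $C_{1}=-h/2$. The key observation is then that the Goursat problem \eqref{GoursatL1}--\eqref{GoursatL2} with $C_{1}=-h/2$ coincides, after the change of variables $(x,t)\mapsto(t,x)$ and a sign flip, with the Goursat problem \eqref{GoursatTk1}--\eqref{GoursatTk2} that characterizes $\mathbf{K}$; uniqueness of solutions of such hyperbolic Goursat problems then gives $L(x,t;h)=-\mathbf{K}(t,x;h)$.

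Finally, to remove the smoothness assumption on $q$, I would approximate $q\in C[-a,a]$ uniformly by a sequence $q_{n}\in C^{1}[-a,a]$ and invoke the uniform convergence $\mathbf{K}_{n}\to\mathbf{K}$ already used in the second half of the proof of Theorem \ref{TmainGoursat}; the identity $\mathbf{T}_{h,n}\,\mathbf{T}_{h,n}^{-1}=I$, with the explicit formula \eqref{Tinverse} on the right, then passes to the limit. The main obstacle I expect is the careful bookkeeping in the integration-by-parts computation that produces the Goursat problem for $L$, together with justifying the appeal to uniqueness in the modified characteristic setting; however, both of these essentially mirror arguments already carried out for $\mathbf{K}$, so they should be mechanical rather than conceptually hard.
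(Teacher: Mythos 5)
Your proposal is correct and follows essentially the same route as the paper: the authors likewise write $\mathbf{T}_{h}^{-1}=I+L$, note that $L$ satisfies the Goursat problem \eqref{GoursatL1}--\eqref{GoursatL2} (citing Lions rather than rederiving it), fix $C_{1}=-h/2$ by the initial-value correspondence of Proposition \ref{ThMapsSolutions}, identify $L(x,t;h)=-\mathbf{K}(t,x;h)$ by uniqueness against \eqref{GoursatTk1}--\eqref{GoursatTk2}, and remove the $C^{1}$ assumption on $q$ by approximation. The only difference is that you derive the Goursat problem for $L$ directly instead of quoting it, which is a harmless elaboration.
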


\begin{example}
For the operator $\mathbf{T}_{-1}$ with the kernel $K(x,t)$ and functions
$u_{i}$ and $v_{i}$, $i=1,2,3$ from Example \ref{modelExample} it is easy to
see that indeed
\[
u_{i}(x)=\mathbf{T}_{-1}^{-1}[v_{i}](x)=v_{i}(x)-\int_{-x}^{x}\frac
{x-1}{2(t+1)}v_{i}(t)\,dt,\qquad i=1,2,3.
\]

\end{example}

\subsection{A complete system of recursive integrals}

Let $f\in C^{2}(a,b)\cap C^{1}[a,b]$ be a complex valued function and
$f(x)\neq0$ for any $x\in[a,b]$. The interval $(a,b)$ is supposed to be
finite. Let us consider the following auxiliary functions%
\begin{align}
\widetilde{X}^{(0)}(x)  &  \equiv X^{(0)}(x)\equiv1,\label{X1}\\
\widetilde{X}^{(n)}(x)  &  =n\int_{x_{0}}^{x}\widetilde{X}^{(n-1)}(s)\left(
f^{2}(s)\right)  ^{(-1)^{n-1}}\,\mathrm{d}s,\label{X2}\\
X^{(n)}(x)  &  =n\int_{x_{0}}^{x}X^{(n-1)}(s)\left(  f^{2}(s)\right)
^{(-1)^{n}}\,\mathrm{d}s, \label{X3}%
\end{align}
where $x_{0}$ is an arbitrary fixed point in $[a,b]$. We introduce the
infinite system of functions $\left\{  \varphi_{k}\right\}  _{k=0}^{\infty}$
defined as follows
\begin{equation}
\varphi_{k}(x)=
\begin{cases}
f(x)X^{(k)}(x), & k \text{\ odd,}\\
f(x)\widetilde{X}^{(k)}(x), & k \text{\ even,}%
\end{cases}
\label{phik}%
\end{equation}
where the definition of $X^{(k)}$ and $\widetilde{X}^{(k)}$ is given by
(\ref{X1})--(\ref{X3}) with $x_{0}$ being an arbitrary point of the interval
$[a,b]$.

\begin{example}
\label{ExamplePoly}Let $f\equiv1$, $a=0$, $b=1$. Then it is easy to see that
choosing $x_{0}=0$ we have $\varphi_{k}(x)=x^{k}$, $k\in\mathbb{N}_{0}$ where
by $\mathbb{N}_{0}$ we denote the set of non-negative integers.
\end{example}

In \cite{KrCMA2011} it was shown that the system $\left\{  \varphi
_{k}\right\}  _{k=0}^{\infty}$ is complete in $L_{2}(a,b)$ and in \cite{KMoT}
its completeness in the space of piecewise differentiable functions with
respect to the maximum norm was obtained and series expansions in terms of the
functions $\varphi_{k}$ were studied.

The system (\ref{phik}) is closely related to the notion of the $L$-basis
introduced and studied in \cite{Fage}. Here the letter $L$ corresponds to a
linear ordinary differential operator. This becomes more transparent from the
following result obtained in \cite{KrCV08} (for additional details and simpler
proof see \cite{APFT} and \cite{KrPorter2010}) establishing the relation of
the system of functions $\left\{  \varphi_{k}\right\}  _{k=0}^{\infty}$ to
Sturm-Liouville equations.

\begin{theorem}
[\cite{KrCV08}]\label{ThGenSolSturmLiouville} Let $q$ be a continuous complex
valued function of an independent real variable $x\in\lbrack a,b],$ $\lambda$
be an arbitrary complex number. Suppose there exists a solution $f$ of the
equation
\begin{equation}
f^{\prime\prime}-qf=0 \label{SLhom}%
\end{equation}
on $(a,b)$ such that $f\in C^{2}(a,b)$ together with $1/f$ are bounded on
$[a,b]$. Then the general solution of the equation
\begin{equation}
u^{\prime\prime}-qu=\lambda u \label{SLlambda}%
\end{equation}
on $(a,b)$ has the form%
\[
u=c_{1}u_{1}+c_{2}u_{2}%
\]
where $c_{1}$ and $c_{2}$ are arbitrary complex constants,
\begin{equation}
u_{1}=\sum_{k=0}^{\infty}\frac{\lambda^{k}}{(2k)!}\varphi_{2k}\quad
\text{and}\quad u_{2}= \sum_{k=0}^{\infty} \frac{\lambda^{k}}{(2k+1)!}%
\varphi_{2k+1} \label{u1u2}%
\end{equation}
and both series converge uniformly on $[a,b]$.
\end{theorem}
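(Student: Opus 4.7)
The plan is to verify that $u_1$ and $u_2$ formally satisfy the equation using a recursion on the $\varphi_k$, then establish uniform convergence of the series and of its termwise derivatives of order up to two, and finally check linear independence by evaluating at $x_0$.

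First I would establish the key recursive identity $\varphi_k'' - q\varphi_k = k(k-1)\varphi_{k-2}$ for $k\ge 2$, together with $\varphi_0'' - q\varphi_0 = 0$ and $\varphi_1'' - q\varphi_1 = 0$. The boundary cases are immediate: $\varphi_0 = f$ solves \eqref{SLhom} by hypothesis, and $\varphi_1 = f\int_{x_0}^x f^{-2}\,ds$ is the standard reduction-of-order partner. For $k\ge 2$ (say $k$ even), I would just differentiate $\varphi_k = f\widetilde X^{(k)}$ twice, use $(\widetilde X^{(k)})' = k\widetilde X^{(k-1)}/f^{2}$ and $(\widetilde X^{(k-1)})' = (k-1)\widetilde X^{(k-2)} f^{2}$, and collect terms; the cross-terms cancel and $f''$ produces $q\varphi_k$, leaving $k(k-1) f \widetilde X^{(k-2)} = k(k-1)\varphi_{k-2}$. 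The odd case is analogous.

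Next I would bound the $\varphi_k$ uniformly on $[a,b]$. Setting $M := \max_{[a,b]}\bigl(|f|+|1/f|\bigr)$, an easy induction (or a direct rewriting of $X^{(k)}$ as a $k$-fold iterated integral) gives $|X^{(k)}(x)|, |\widetilde X^{(k)}(x)| \le (M^{2}(b-a))^{k}$, hence $|\varphi_k(x)| \le M\bigl(M^{2}(b-a)\bigr)^{k}$. Then
\[
\sum_{k=0}^{\infty}\frac{|\lambda|^{k}}{(2k)!}|\varphi_{2k}(x)| \le M\sum_{k=0}^{\infty}\frac{\bigl(|\lambda|\,M^{4}(b-a)^{2}\bigr)^{k}}{(2k)!},
\]
which converges (to $M\cosh\sqrt{|\lambda|}\,M^{2}(b-a)$) uniformly in $x$, and similarly for $u_2$. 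To legitimize term-by-term differentiation, I would bound $|\varphi_k'|$ using the explicit formula $\varphi_k' = f'\widetilde X^{(k)}+ k\widetilde X^{(k-1)}/f$ (and its odd analogue), giving an estimate of the form $|\varphi_k'(x)|\le C\,k\,(M^{2}(b-a))^{k-1}$, which still yields uniform convergence of the differentiated series. For the second derivative, rather than differentiating twice by hand I would simply invoke the recursion to write $|\varphi_k''|\le |q|\cdot|\varphi_k|+k(k-1)|\varphi_{k-2}|$ and again obtain uniform convergence. With termwise differentiation justified, the recursion yields
\[
u_1'' - q u_1 = \sum_{k=1}^{\infty}\frac{\lambda^{k}}{(2k)!}\cdot 2k(2k-1)\,\varphi_{2k-2} = \lambda\sum_{j=0}^{\infty}\frac{\lambda^{j}}{(2j)!}\,\varphi_{2j} = \lambda u_1,
\]
and analogously $u_2''-qu_2=\lambda u_2$.

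Finally, linear independence follows by evaluation at $x=x_0$. Since $X^{(k)}(x_0)=\widetilde X^{(k)}(x_0)=0$ for $k\ge 1$, only the $k=0$ term survives in $u_1(x_0)=f(x_0)$, and only the $k=0$ term in the derivative series (via $\varphi_1'(x_0)=1/f(x_0)$) contributes to $u_2'(x_0)=1/f(x_0)$, while $u_2(x_0)=0$ and $u_1'(x_0)=f'(x_0)$. Hence the Wronskian at $x_0$ equals $1$, so $\{u_1,u_2\}$ is a fundamental system and every solution of \eqref{SLlambda} has the claimed form. The only mildly delicate step is the uniform estimate on $\varphi_k'$, since $f'$ needs to be controlled on the closed interval; this is exactly where the hypothesis $f\in C^{1}[a,b]$ is used.
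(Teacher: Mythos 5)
Your proof is correct. Note that the paper does not prove this theorem at all --- it is quoted from the reference [KrCV08] (with simpler proofs in [APFT] and [KrPorter2010]) --- and your argument is essentially the standard one from those sources: the recursion $\varphi_k''-q\varphi_k=k(k-1)\varphi_{k-2}$, the iterated-integral bounds $|X^{(k)}|,|\widetilde X^{(k)}|\le (M^2(b-a))^k$ justifying termwise differentiation, and the Wronskian evaluation at $x_0$ (which matches the initial values recorded in Remark \ref{RemInitialValues}). Your closing caveat about bounding $f'$ up to the endpoints is only needed for uniform convergence of the differentiated series on the closed interval; since the equation is only claimed on $(a,b)$, local uniform convergence there suffices and $f\in C^{2}(a,b)$ already provides it.
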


\begin{remark}
\label{RemInitialValues}It is easy to see that by definition the solutions
$u_{1}$ and $u_{2}$ satisfy the following initial conditions
\begin{equation}
u_{1}(x_{0})=f(x_{0}),\qquad u_{1}^{\prime}(x_{0})=f^{\prime}(x_{0}),
\label{initial1}%
\end{equation}%
\begin{equation}
u_{2}(x_{0})=0,\qquad u_{2}^{\prime}(x_{0})=1/f(x_{0}). \label{initial2}%
\end{equation}

\end{remark}

\subsection{Transmutations and systems of recursive integrals}

In this subsection we show the connection between the transmutation operators
$\mathbf{T}_{h}$ and the functions $\varphi_{k}$. All the results of this
subsection were proved in \cite{CKT}. Here we recall them with explanations.

We suppose that $f$ is a solution of (\ref{SLhom}) fulfilling the condition of
Theorem \ref{ThGenSolSturmLiouville} on a finite interval $(-a,a)$. We
normalize $f$ in such a way that $f(0)=1$, and let $f^{\prime}(0)=h$ where $h
$ is some complex constant. Let us obtain the expansion of the solution
$c(\omega,x;h)$ from Subsection \ref{SubSectTransmSL} in terms of the
functions $\varphi_{k}$. According to Remark \ref{RemInitialValues} the
solutions (\ref{u1u2}) of equation (\ref{SLlambda}) have the following initial
values%
\[
u_{1}(0)=1,\qquad u_{1}^{\prime}(0)=h,\qquad u_{2}(0)=0,\qquad u_{2}^{\prime
}(0)=1.
\]
Hence due to \eqref{ICcos} we obtain $c(\omega,x;h)=u_{1}(x)$. From \eqref{c
cos} and \eqref{u1u2} we have the equality%
\[
\sum_{k=0}^{\infty}\frac{(i\omega)^{2k}}{(2k)!}\varphi_{2k}(x)=\sum
_{j=0}^{\infty}\frac{(i\omega)^{2j}x^{2j}}{(2j)!}+\int_{0}^{x}\bigg(K_{c}%
(x,t;h)\sum_{j=0}^{\infty}\frac{(i\omega)^{2j}t^{2j}}{(2j)!}\bigg)dt.
\]
As the series under the sign of integral converges uniformly and the kernel
$K_{c}(x,t;h)$ is at least continuously differentiable (for a continuous $q$
\cite{Marchenko}) we obtain the following relation%
\[
\sum_{k=0}^{\infty}\frac{(i\omega)^{2k}}{(2k)!}\varphi_{2k}(x)=\sum
_{j=0}^{\infty}\frac{(i\omega)^{2j}}{(2j)!}\bigg(x^{2j}+\int_{0}^{x}%
K_{c}(x,t;h)\,t^{2j}dt\bigg).
\]
This equality holds for any $\omega$ hence we obtain the termwise relations%
\begin{equation}
\varphi_{2k}=T_{c}[x^{2k}],\quad k\in\mathbb{N}_{0}. \label{Tc xk}%
\end{equation}
Similarly, due to \eqref{ICsin} we observe that $s(\omega,x;\infty)=u_{2}(x)$
and from \eqref{s sin} we find that
\begin{equation}
\varphi_{2k+1}=T_{s}[x^{2k+1}],\quad k\in\mathbb{N}_{0}. \label{Ts xk}%
\end{equation}
Hence from the last two equalities and \eqref{TcPe+TsPo} we conclude that the
following statement is true.

\begin{theorem}
[\cite{CKT}]\label{Th Transmutation of Powers} Let $q$ be a continuous complex
valued function of an independent real variable $x\in\lbrack-a,a]$, and $f$ be
a particular solution of (\ref{SLhom}) such that $f\in C^{2}(-a,a)$ together
with $1/f$ are bounded on $[-a,a]$ and normalized as $f(0)=1$, $f^{\prime
}(0)=h$, where $h$ is a complex number. Then the operator (\ref{Tmain}) with
the kernel defined by (\ref{Kmain}) transforms $x^{k}$ into $\varphi_{k}(x)$
for any $k\in\mathbb{N}_{0}$.
\end{theorem}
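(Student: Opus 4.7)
The plan is to reduce the theorem to its two analogues for $T_c$ and $T_s$ acting on monomials of definite parity, and then invoke the decomposition \eqref{TcPe+TsPo}. Since $P_e x^{2k}=x^{2k}$ and $P_o x^{2k}=0$, while $P_e x^{2k+1}=0$ and $P_o x^{2k+1}=x^{2k+1}$, formula \eqref{TcPe+TsPo} immediately gives $\mathbf{T}_h[x^{2k}]=T_c[x^{2k}]$ and $\mathbf{T}_h[x^{2k+1}]=T_s[x^{2k+1}]$. Hence it suffices to establish the identities \eqref{Tc xk} and \eqref{Ts xk}, that is $T_c[x^{2k}]=\varphi_{2k}$ and $T_s[x^{2k+1}]=\varphi_{2k+1}$ for every $k\in\mathbb{N}_0$.

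Next I would identify the ``cosine-type'' and ``sine-type'' solutions of \eqref{SLomega2} with the two series solutions of Theorem \ref{ThGenSolSturmLiouville} taken at $\lambda=-\omega^2$ and $x_0=0$. The normalization $f(0)=1$, $f'(0)=h$ together with Remark~\ref{RemInitialValues} shows that $u_1$ and $u_2$ from \eqref{u1u2} satisfy, at $x=0$, exactly the initial conditions \eqref{ICcos} and \eqref{ICsin}. By uniqueness for the Cauchy problem associated with \eqref{SLomega2} one then has
\[
c(\omega,x;h)=\sum_{k=0}^{\infty}\frac{(i\omega)^{2k}}{(2k)!}\varphi_{2k}(x),\qquad s(\omega,x;\infty)=\sum_{k=0}^{\infty}\frac{(i\omega)^{2k}}{(2k+1)!}\varphi_{2k+1}(x),
\]
with uniform convergence on $[-a,a]$.

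The third step is to compare these expansions with the Volterra representations \eqref{c cos}--\eqref{s sin}. Using uniform convergence in $t\in[-a,a]$ of the Taylor series for $\cos\omega t$ and $(\sin\omega t)/\omega$, together with continuity of the kernels $K_c(x,\cdot;h)$ and $K_s(x,\cdot;\infty)$ (guaranteed by $q\in C[-a,a]$ via the Marchenko theory recalled in Subsection~\ref{SubSectTransmSL}), one interchanges summation and integration to obtain
\[
c(\omega,x;h)=\sum_{j=0}^{\infty}\frac{(i\omega)^{2j}}{(2j)!}\Bigl(x^{2j}+\int_{0}^{x}K_c(x,t;h)\,t^{2j}\,dt\Bigr),
\]
and an analogous identity for $s(\omega,x;\infty)$. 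Equating the two expansions as entire functions of $\omega$ and matching coefficients of $\omega^{2k}$ yields \eqref{Tc xk} and \eqref{Ts xk}, completing the reduction.

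The only real obstacle is technical: justifying term-by-term integration and the identification of Taylor coefficients. Both are routine, the first because the series for $\cos\omega t$ and $(\sin\omega t)/\omega$ converge uniformly on $[-a,a]$ (uniformly in $\omega$ on compact sets) and the Volterra kernels are continuous, and the second because both series represent the same entire function of $\omega$ and the coefficients $\varphi_{2k}(x)$, $x^{2k}+\int_0^x K_c(x,t;h)t^{2k}\,dt$, etc., are continuous in $x$. Once these points are dispensed with, the statement follows by assembling the even and odd cases through \eqref{TcPe+TsPo}.
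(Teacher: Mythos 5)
Your proposal is correct and follows essentially the same route as the paper: identify $c(\omega,x;h)$ and $s(\omega,x;\infty)$ with the series solutions $u_1$, $u_2$ of Theorem \ref{ThGenSolSturmLiouville} via Remark \ref{RemInitialValues}, interchange summation and integration in the representations \eqref{c cos}--\eqref{s sin} to match coefficients of powers of $\omega$, obtaining \eqref{Tc xk} and \eqref{Ts xk}, and then assemble the even and odd cases through \eqref{TcPe+TsPo}. The only cosmetic difference is that you make the parity reduction explicit at the outset, whereas the paper invokes \eqref{TcPe+TsPo} at the end.
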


Thus, we clarified what is the result of application of the transmutation
$\mathbf{T}_{h}$ to the powers of the independent variable. This is very
useful due to the fact that as a rule the construction of the kernel
$\mathbf{K}(x,t;h)$ in a more or less explicit form up to now is impossible.
Our result offers an algorithm for transmuting functions which can be
represented or at least approximated by finite or infinite polynomials in the
situation when $\mathbf{K}(x,t;h)$ is unknown.

\begin{remark}
Let $f$ be the solution of (\ref{SLhom}) satisfying the initial conditions
\begin{equation}
f(0)=1\quad\text{and}\quad f^{\prime}(0)=0. \label{initcond 1 0}%
\end{equation}
If it does not vanish on $[-a,a]$ then from Theorem
\ref{Th Transmutation of Powers} we obtain that the original transmutation
operator $T$ transmutes powers of the independent variable into $\varphi_{k}$:
$\varphi_{k}=T[x^{k}]$ for any $k\in\mathbb{N}_{0}$. In general, of course $f$
may have zeros on $[-a,a]$ and hence one can guarantee that the operator $T$
transmutes the powers of $x$ into $\varphi_{k}$ whose construction is based on
the solution $f$ satisfying (\ref{initcond 1 0}) only in some neighborhood of
the origin. If we construct $\varphi_{k}$ by a solution $f$ such that $f(0)=1$
and $f^{\prime}(0)=h$, then the original transmutation operator $T$ does not
map powers of the independent variable exactly to the functions $\varphi_{k}$.
As it was shown in \cite{CKT} the following equalities are valid
\begin{align*}
\varphi_{k}  &  =T[x^{k}]\text{\quad when }k\text{ is odd,}\\
\varphi_{k}-\frac{h}{k+1}\varphi_{k+1}  &  =T[x^{k}]\text{\quad when }%
k\in\mathbb{N}_{0}\text{ is even.}%
\end{align*}
This observation explains the necessity to consider the parametrized family of
transmutation operators whose mapping properties are well adjusted to
corresponding families of the functions $\varphi_{k}$.
\end{remark}

\begin{example}
For the operator $\mathbf{T}_{-1}$ from Example \ref{modelExample}, consider
the function $f=\mathbf{T}_{-1}[1]=\frac{1}{x+1}$ as a solution of
\eqref{SLhom} such that $f(0)=1$ and $f^{\prime}(0)=h=-1$, nonvanishing on any
$[-a,a]\subset(-1,1)$. The first 3 functions $\varphi_{k}$ are given by
\[
\varphi_{0}=f=\frac{1}{x+1},\quad\varphi_{1}=\frac{x^{3}+3x^{2}+3x}%
{3(x+1)},\quad\varphi_{2}=\frac{2x^{3}+3x^{2}}{3(x+1)}.
\]
We have seen already that $\varphi_{0}=\mathbf{T}_{-1}[1]$ and $\varphi
_{1}=\mathbf{T}_{-1}[x]$. We have also $\mathbf{T}_{-1}[x^{2}]=x^{2}%
-\frac{x^{3}}{3(x+1)}=\varphi_{2}$.
\end{example}

\section{Transmutation operators for Darboux transformed
equations\label{Sect Transmutations for Darboux}}

\subsection{The Darboux transformation}

Consider a Sturm-Liouville operator $A_{1}:=\frac{d^{2}}{dx^{2}}-q_{1}(x)$,
where $q_{1}$ is a continuous complex-valued function on some segment
$[-a,a]$. Suppose a solution $f$ of the equation $A_{1}f=0$ is given such that
$f(x)\neq0,\ x\in\lbrack-a,a]$, it is normalized as $f(0)=1$ and
$h:=f^{\prime}(0)$ is some complex number. Then the operator $A_{2}%
:=\frac{d^{2}}{dx^{2}}-q_{2}(x)$, where $q_{2}(x)=2\big(\frac{f^{\prime}%
(x)}{f(x)}\big)^{2}-q_{1}(x)$, is known as the Darboux transformation of the
operator $A_{1}$.

Initially the Darboux transformation served for establishing a relation
between the general solution $u$ of the equation $A_{1}u=\lambda u$ and the
general solution $v$ of the equation $A_{2}v=\lambda v$ by a known particular
solution $f$ of the equation $A_{1}u=0$. This relation is given by the formula
$v(x)=u^{\prime}(x)-u(x)\frac{f^{\prime}(x)}{f(x)}$. Later on it was found
that the Darboux transformation was closely related to the factorization of
the Schr\"{o}dinger equation, and nowadays it is used in dozens of works, see
e.g. \cite{Cies, GHZh, Matveev, RS} in connection with solitons and integrable
systems, e.g. \cite{BS, HV, NPS, PPS} and the review \cite{Rosu} of
applications to quantum mechanics.

We remind some well known facts about the Darboux transformation. First, $1/f
$ is the solution of $A_{2}u=0$. Second, it is closely related to the
factorization of Sturm-Liouville and one-dimensional Schr\"{o}dinger
operators. Namely, we have
\begin{align}
A_{1}=\frac{d^{2}}{dx^{2}}-q_{1}(x)  &  =\Big(\partial_{x}+\frac{f^{\prime}%
}{f}\Big)\Big(\partial_{x}-\frac{f^{\prime}}{f}\Big)=\frac{1}{f}\partial
_{x}f^{2}\partial_{x}\frac{1}{f}\cdot,\label{A1factor}\\
A_{2}=\frac{d^{2}}{dx^{2}}-q_{2}(x)  &  =\Big(\partial_{x}-\frac{f^{\prime}%
}{f}\Big)\Big(\partial_{x}+\frac{f^{\prime}}{f}\Big)=f\partial_{x}\frac
{1}{f^{2}}\partial_{x}f\cdot. \label{A2factor}%
\end{align}
Suppose that $u$ is a solution of the equation $A_{1}u=\omega u$ for some
$\omega\in\mathbb{C}$. Then the function $v=\big(\partial_{x}-\frac{f^{\prime
}}{f}\big)u=\big(f\partial_{x}\frac{1}{f}\big)u$ is a solution of the equation
$A_{2}v=\omega v$, and vice versa, given a solution $v$ of $A_{2}v=\omega v$,
the function $u=\big(\partial_{x}+\frac{f^{\prime}}{f}\big)v=\big(\frac{1}%
{f}\partial_{x}f\big)v$ is a solution of $A_{1}u=\omega u$.

\subsection{Construction of transmutation operators for Darboux transformed
equations}

Consider the operators $A_{1}$, $A_{2}$ and a particular solution $f$ of the
equation $A_{1}f=0$ as in the previous subsection. Suppose that the operator
$\mathbf{T}_{1;h}$ which transmutes the operator $A_{1}$ into the operator
$B=d^{2}/dx^{2}$ is known in the sense that its kernel $\mathbf{K}_{1}(x,t;h)$
is given. As before $h=f^{\prime}(0)$ and $\mathbf{T}_{1;h}$ transforms
solutions according to Proposition \ref{ThMapsSolutions}. As the parameter $h
$ is fixed by the value $f^{\prime}(0)$ we will simply write $\mathbf{T}_{1}$
instead of $\mathbf{T}_{1;h}$. We know that the function $1/f$ is the
non-vanishing solution of the equation $A_{2}u=0$ satisfying $1/f(0)=1$ and
$(1/f)^{\prime}(0)=-h$. Hence it is natural to look for the operator
$\mathbf{T}_{2;-h}$ transmuting the operator $A_{2}$ into the operator $B$. We
will simply write $\mathbf{T}_{2}$ further in this paper.

Let us explain the idea for obtaining the operator $\mathbf{T}_{2}$. We want
to find an operator transforming solutions of the equation $Bu+\omega^{2}u=0$
into solutions of the equation $A_{2}u+\omega^{2}u=0$, see the first diagram
below. Starting with a solution $\sigma$ of the equation $(\partial_{x}%
^{2}+\omega^{2})\sigma=0$, by application of $\mathbf{T}_{1}$ we get a
solution of $(A_{1}+\omega^{2})u=0$, and the expression $\big(f\partial
_{x}\frac{1}{f}\big)\mathbf{T}_{1}\sigma$ is a solution of $(A_{2}+\omega
^{2})v=0$. But the operator $\big(f\partial_{x}\frac{1}{f}\big)\mathbf{T}_{1}$
is unbounded and hence cannot coincide with the operator $\mathbf{T}_{2}$. In
order to find the required bounded operator we may consider the second copy of
the equation $(\partial_{x}^{2}+\omega^{2})u=0$, which is a result of the
Darboux transformation applied to $(\partial_{x}^{2}+\omega^{2})\sigma=0$ with
respect to the particular solution $g\equiv1$ and construct the operator
$\mathbf{T}_{2}$ by making the second diagram commutative. In order to obtain
a bounded operator $\mathbf{T}_{2}$, instead of using $f\partial_{x}\frac
{1}{f}$ for the last step, we will use the inverse of $\frac{1}{f}\partial
_{x}f$, i.e. $\frac{1}{f}\big(\int_{0}^{x}f(s)\cdot\,ds+C\big)$.
\[
\xymatrix@R+1pt@C+12pt@M+1pt{
\partial_{x}^{2}+\omega^{2} \ar[r]^(.45){\mathbf{T}_1} \ar@/_/[dr]_(.45){\mathbf{T}_2} & \partial_{x}^{2}-q_{1}+\omega^{2} \ar[d]^{f\partial_{x}\frac{1}{f}}\\
& \partial_{x}^{2}-q_{2}+\omega^{2}
}\qquad\qquad\qquad
\xymatrix@R+1pt@C+12pt@M+1pt{
\partial_{x}^{2}+\omega^{2} \ar[r]^(.45){\mathbf{T}_1} & \partial_{x}^{2}-q_{1}+\omega^{2} \ar@<1ex>[d]^{\frac{1}{f}(\int
f\cdot+C)}\\
\partial_{x}^{2}+\omega^{2} \ar[u]^{\partial_{x}} \ar[r]^(.45){\mathbf{T}_2} & \partial_{x}^{2}-q_{2}+\omega^{2} \ar@<1ex>[u]^{\frac{1}{f}
\partial_{x}f}
}
\]

That explains how to obtain the following theorem.

\begin{theorem}
\label{Th T2Integral} The operator $T_{2}$, acting on solutions $u$ of
equations $(\partial_{x}^{2}+\omega^{2})u=0,\ \omega\in\mathbb{C}$ by the
rule
\begin{equation}
T_{2}[u](x)=\frac{1}{f(x)}\bigg(\int_{0}^{x}f(\eta)\mathbf{T}_{1}[u^{\prime
}](\eta)\,d\eta+u(0)\bigg) \label{T2sol}%
\end{equation}
coincides with the transmutation operator $\mathbf{T}_{2;-h}$.
\end{theorem}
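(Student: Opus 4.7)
The plan is to show that both $T_2[u]$ and $\mathbf{T}_{2;-h}[u]$ solve the same second-order linear equation $(A_2+\omega^2)v=0$ with identical initial data at $x=0$; uniqueness for the initial value problem then forces them to coincide on every solution $u$ of $(\partial_x^2+\omega^2)u=0$.

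First I would handle the initial values. Since $f(0)=1$, formula \eqref{T2sol} gives $T_2[u](0)=u(0)$ at once. Differentiating \eqref{T2sol} yields
$$
v'(x) \;=\; -\frac{f'(x)}{f(x)}\,v(x) + \mathbf{T}_1[u'](x),
$$
so at $x=0$, using $f'(0)=h$ together with $\mathbf{T}_1[u'](0)=u'(0)$ from Proposition \ref{ThMapsSolutions}, one reads off $v'(0)=u'(0)-h\,u(0)$. These are precisely the initial values that Proposition \ref{ThMapsSolutions} attaches to $\mathbf{T}_{2;-h}[u]$, recalling that the parameter of $\mathbf{T}_2$ is $-h$.

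The main step is to verify $(A_2+\omega^2)T_2[u]=0$, and here I would exploit the Darboux factorizations \eqref{A1factor} and \eqref{A2factor}. Rewriting \eqref{T2sol} as $(fv)'=f\,\mathbf{T}_1[u']$ gives $(\partial_x+f'/f)v=\mathbf{T}_1[u']$, so applying $(\partial_x-f'/f)$ produces $A_2v=(\partial_x-f'/f)\mathbf{T}_1[u']$. Since $u'$ also solves $(\partial_x^2+\omega^2)u'=0$, Proposition \ref{ThMapsSolutions} ensures $(A_1+\omega^2)\mathbf{T}_1[u']=0$, and using $A_1=(\partial_x+f'/f)(\partial_x-f'/f)$ this says that $g:=(\partial_x-f'/f)\mathbf{T}_1[u']$ and $-\omega^2v$ both satisfy the same first-order equation $(\partial_x+f'/f)y=-\omega^2\mathbf{T}_1[u']$. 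Their difference is therefore of the form $C/f$ for some constant $C$.

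The remaining, and most delicate, task is to pin down $C=0$ by evaluating at $x=0$. This requires applying Proposition \ref{ThMapsSolutions} not to $u$ but to its derivative, giving $(\mathbf{T}_1[u'])'(0)=u''(0)+h\,u'(0)=-\omega^2u(0)+h\,u'(0)$; a short computation then yields $g(0)=-\omega^2u(0)=-\omega^2v(0)$, so $C=0$ as required. Hence $A_2T_2[u]=-\omega^2T_2[u]$, and the uniqueness step finishes the proof. The main obstacle is the bookkeeping at the origin: one needs the values of $\mathbf{T}_1[u']$ and of its derivative at $x=0$, not those of $\mathbf{T}_1[u]$, and one must apply Proposition \ref{ThMapsSolutions} to $u'$ to get both correctly.
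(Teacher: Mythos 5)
Your proposal is correct and follows essentially the same route as the paper: both verify the initial data $v(0)=u(0)$, $v'(0)=u'(0)-hu(0)$, use the factorizations \eqref{A1factor}--\eqref{A2factor} to show that $A_2T_2[u]$ and $-\omega^2T_2[u]$ differ by a multiple of $1/f$, and kill that multiple by evaluating at the origin with Proposition \ref{ThMapsSolutions} applied to $u'$. The only cosmetic difference is that you phrase the ``difference is $C/f$'' step via a common first-order equation, whereas the paper applies $\tfrac{1}{f}\partial_x f$ to both sides; these are the same observation.
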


\begin{proof}
According to Proposition \ref{ThMapsSolutions} the operator $\mathbf{T}%
_{2;-h}$ transforms a solution $u$ of the equation $(\partial_{x}^{2}%
+\omega^{2})u=0$ into a solution of the equation $(A_{2}+\omega^{2}%
)v=(\partial_{x}^{2}-q_{2}(x)+\omega^{2})v=0$ with the correspondence of
initial conditions
\begin{equation}
\mathbf{T}_{2;-h}[u](0)=u(0),\quad(\mathbf{T}_{2;-h}[u])^{\prime}%
(0)=u^{\prime}(0)-hu(0). \label{T2IC}%
\end{equation}
Hence we have to verify the same properties for the operator $T_{2}$. Consider
a solution $u$ of the equation $(\partial_{x}^{2}+\omega^{2})u=0$ for some
particular $\omega\in\mathbb{C}$. The function $u^{\prime}$ is again a
solution of this equation, and by Proposition \ref{ThMapsSolutions},
$\mathbf{T}_{1}u$ is a solution of the equation $(\partial_{x}^{2}%
-q_{1}+\omega^{2})v=0$, so with the aid of the factorization \eqref{A1factor}
we obtain
\begin{equation}
\frac{1}{f}\partial_{x}f^{2}\partial_{x}\frac{1}{f}\mathbf{T}_{1}[u^{\prime
}]=-\omega^{2}\mathbf{T}_{1}[u^{\prime}]. \label{factoredT1du}%
\end{equation}
Let us apply the operator $A_{2}$ to the function $T_{2}u$ and use the
factorization \eqref{A2factor}.
\begin{equation}
A_{2}T_{2}[u](x)=\Big(f\partial_{x}\frac{1}{f^{2}}\partial_{x}f\Big)\left(
\frac{1}{f(x)}\bigg(\int_{0}^{x}f(s)\mathbf{T}_{1}[u^{\prime}%
](s)\,ds+u(0)\bigg)\right)  =\Big(f\partial_{x}\frac{1}{f}\Big)\big(\mathbf{T}%
_{1}[u^{\prime}]\big). \label{A2T2u}%
\end{equation}
Applying $\frac{1}{f}\partial_{x}f$ to both sides of \eqref{A2T2u} due to
\eqref{factoredT1du} we obtain
\[
\Big(\frac{1}{f}\partial_{x}f\Big)\big(A_{2}T_{2}[u]\big)=\Big(\frac{1}%
{f}\partial_{x}f^{2}\partial_{x}\frac{1}{f}\Big)\big(\mathbf{T}_{1}[u^{\prime
}]\big)=-\omega^{2}\mathbf{T}_{1}[u^{\prime}]=-\omega^{2}\Big(\frac{1}%
{f}\partial_{x}f\Big)\big(T_{2}[u]\big).
\]
Hence the function $A_{2}T_{2}[u]$ may differ from the function $-\omega
^{2}T_{2}[u]$ only by $c/f$, where $c$ is a constant. To find the value of the
constant we compute the values of both expressions for $x=0$. We have
\[
T_{2}[u](0)=u(0)
\]
and%
\[
A_{2}T_{2}[u](x)=\Big(f\partial_{x}\frac{1}{f}\Big)\big(\mathbf{T}%
_{1}[u^{\prime}](x)\big)=-\frac{f^{\prime}(x)}{f(x)}\mathbf{T}_{1}[u^{\prime
}](x)+\big(\mathbf{T}_{1}[u^{\prime}](x)\big)^{\prime}.
\]
As $u^{\prime}$ is a solution of $(\partial_{x}^{2}+\omega^{2})v=0$, by
Proposition \ref{ThMapsSolutions} we have $\mathbf{T}_{1}[u^{\prime
}](0)=u^{\prime}(0)$, $\big(\mathbf{T}_{1}[u^{\prime}]\big)^{\prime
}(0)=u^{\prime\prime}(0)+hu^{\prime}(0)=-\omega^{2}u(0)+hu^{\prime}(0)$ (the
last equality holds since the function $u$ is also a solution of
$(\partial_{x}^{2}+\omega^{2})v=0$). Therefore
\[
A_{2}T_{2}[u](0)=-hu^{\prime}(0)+\omega u(0)+hu^{\prime}(0)=-\omega^{2}u(0).
\]
Hence $c=0$ and the operator $T_{2}$ maps solutions of the equation
$(\partial_{x}^{2}+\omega^{2})u=0$ into solutions of the equation
$(A_{2}+\omega^{2})v=0$ for any $\omega\in\mathbb{C}$. To finish the proof, we
have to check conditions \eqref{T2IC} for the operator $T_{2}$. We have
\[
T_{2}[u](0)=\frac{u(0)}{f(0)}=u(0)
\]
and%
\[
\big(T_{2}[u]\big)^{\prime}(x)=-\frac{f^{\prime}(x)}{f(x)}T_{2}[u](x)+\frac
{1}{f(x)}\cdot f(x)\mathbf{T}_{1}[u^{\prime}](x)=-\frac{f^{\prime}(x)}%
{f(x)}T_{2}[u](x)+\mathbf{T}_{1}[u^{\prime}](x),
\]
hence for $x=0$ we obtain
\[
\big(T_{2}[u]\big)^{\prime}(0)=-hT_{2}[u](0)+\mathbf{T}_{1}[u^{\prime
}](0)=-hu(0)+u^{\prime}(0),
\]
which finishes the proof.
\end{proof}

Now we show that the operator $T_{2}$ can be written as a Volterra integral
operator and, as a consequence, extended by continuity to a wider class of functions.

\begin{theorem}
\label{Th T2Volterra} The operator $T_{2}$ admits a representation as the
Volterra integral operator
\begin{equation}
T_{2}[u](x)=u(x)+\int_{-x}^{x}\mathbf{K}_{2}(x,t;-h)u(t)\,dt, \label{T2}%
\end{equation}
with the kernel
\begin{equation}
\mathbf{K}_{2}(x,t;-h)=-\frac{1}{f(x)}\bigg(\int_{-t}^{x}\partial
_{t}\mathbf{K}_{1}(s,t;h)f(s)\,ds+\frac{h}{2}f(-t)\bigg). \label{K2}%
\end{equation}

\end{theorem}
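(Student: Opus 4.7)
The plan is to derive the Volterra representation of $T_2$ by substituting the integral form of $\mathbf{T}_1$ into (\ref{T2sol}) and then applying integration by parts and Fubini's theorem to read off the kernel. Writing $\mathbf{T}_1[u'](\eta) = u'(\eta) + \int_{-\eta}^{\eta}\mathbf{K}_1(\eta,t;h)u'(t)\,dt$, I would first integrate by parts the single integral,
\begin{equation*}
\int_0^x f(\eta)\,u'(\eta)\,d\eta = f(x)u(x) - u(0) - \int_0^x f'(\eta)\,u(\eta)\,d\eta,
\end{equation*}
using $f(0)=1$. After dividing by $f(x)$, the term $f(x)u(x)$ produces the leading $u(x)$ of (\ref{T2}), and the $-u(0)$ cancels the explicit $u(0)$ appearing in (\ref{T2sol}), leaving the contribution $-\frac{1}{f(x)}\int_0^x f'(\eta)u(\eta)\,d\eta$ to the kernel.

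Next I would apply Fubini to the double integral. Since the region $\{(t,\eta):0\le\eta\le x,\ -\eta\le t\le\eta\}$ coincides with $\{(t,\eta):-x\le t\le x,\ |t|\le\eta\le x\}$, the double integral rewrites as $\int_{-x}^{x} u'(t)\,M(t)\,dt$, where $M(t):=\int_{|t|}^{x} f(\eta)\,\mathbf{K}_1(\eta,t;h)\,d\eta$. One checks $M(\pm x)=0$, so a second integration by parts in $t$ gives $-\int_{-x}^{x} u(t)\,M'(t)\,dt$ with no boundary terms. Consequently the resulting kernel equals $-M'(t)/f(x)$ for $t<0$ and $(-f'(t)-M'(t))/f(x)$ for $t>0$, which I would now match against (\ref{K2}).

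Computing $M'(t)$ by Leibniz's rule using the Goursat boundary values $\mathbf{K}_1(-t,t;h)=h/2$ and $\mathbf{K}_1(t,t;h)=\tfrac{h}{2}+\tfrac{1}{2}\int_0^t q_1$ handles $t<0$ directly: one gets exactly the expression $-\frac{1}{f(x)}\bigl(\int_{-t}^{x}f(s)\,\partial_t\mathbf{K}_1(s,t;h)\,ds+\tfrac{h}{2}f(-t)\bigr)$ claimed in (\ref{K2}). The main obstacle is the case $t>0$, where matching (\ref{K2}) reduces, after cancellation of the common tail $\int_t^x f(s)\,\partial_t\mathbf{K}_1(s,t;h)\,ds$, to the identity
\begin{equation*}
f'(t)=f(t)\,\mathbf{K}_1(t,t;h)+\frac{h}{2}f(-t)+\int_{-t}^{t}f(s)\,\partial_t\mathbf{K}_1(s,t;h)\,ds.
\end{equation*}
To prove this identity I would use that $\mathbf{T}_1[1]=f$ (Proposition \ref{ThMapsSolutions} with $v\equiv1$), whence by Theorem \ref{Th Inverse} one has $f(x)-\int_{-x}^{x}\mathbf{K}_1(s,x;h)f(s)\,ds=1$ for all $x\in[-a,a]$. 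Differentiating this relation in $x$ by Leibniz's rule, and again invoking $\mathbf{K}_1(-x,x;h)=h/2$, yields exactly the required identity (with $x$ playing the role of $t$). The piecewise kernel then collapses to the unified expression (\ref{K2}), completing the proof.
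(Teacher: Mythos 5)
Your proof is correct and follows essentially the same route as the paper's: substitute the Volterra form of $\mathbf{T}_1$ into \eqref{T2sol}, integrate by parts, interchange the order of integration, integrate by parts again, and invoke $\mathbf{T}_1^{-1}[f]=1$ (from $\mathbf{T}_1[1]=f$ and Theorem \ref{Th Inverse}) to treat the $t>0$ half of the kernel. The only difference is organizational: the paper uses that identity in undifferentiated form, rewriting $\int_{|s|}^{x}$ as $\int_{-s}^{x}$ minus a term equal to $f(s)-1$ before differentiating, whereas you differentiate first and then apply the differentiated identity to merge the two pieces into the single expression \eqref{K2}.
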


\begin{proof}
Consider the expression
\[
\int_{0}^{x}f(t)\mathbf{T}_{1}[u^{\prime}](t)\,dt+u(0)=\int_{0}^{x}%
f(t)u^{\prime}(t)\,dt+\int_{0}^{x}f(t)\int_{-t}^{t}\mathbf{K}_{1}%
(t,s;h)u^{\prime}(s)\,ds\,dt+u(0).
\]
Suppose that $x>0$ (the opposite case is similar). We integrate by parts the
first integral and change the order of integration in the second integral,
\begin{multline}
\int_{0}^{x}f(t)\mathbf{T}_{1}[u^{\prime}](t)\,dt+u(0)=\label{T2Volterra1}\\
f(x)u(x)-f(0)u(0)-\int_{0}^{x}f^{\prime}(t)u(t)\,dt+\int_{-x}^{x}u^{\prime
}(s)\int_{|s|}^{x}\mathbf{K}_{1}(t,s;h)f(t)\,dt\,ds+u(0)=\\
f(x)u(x)-\int_{0}^{x}f^{\prime}(t)u(t)\,dt+u(s)\int_{|s|}^{x}\mathbf{K}%
_{1}(t,s;h)f(t)\,dt\bigg|_{-x}^{x}-\int_{-x}^{x}u(s)\frac{d}{ds}%
\bigg(\int_{|s|}^{x}\mathbf{K}_{1}(t,s;h)f(t)\,dt\bigg)ds=\\
f(x)u(x)-\int_{0}^{x}f^{\prime}(t)u(t)\,dt-\int_{-x}^{x}u(s)\frac{d}%
{ds}\bigg(\int_{|s|}^{x}\mathbf{K}_{1}(t,s;h)f(t)\,dt\bigg)ds.
\end{multline}

To continue the proof we will use the extension of the kernel $\mathbf{K}_{1}
$ onto the square $|x|\leq a,|t|\leq a$ as a continuously differentiable
function. Consider the integral
\[
\int_{-s}^{|s|}\mathbf{K}_{1}(t,s;h)f(t)\,dt.
\]
For $s\leq0$ it equals zero. For $s>0$ note that by Theorem
\ref{Th Transmutation of Powers}, $\mathbf{T}_{1}[1](x)=f(x)$, so we obtain
from \eqref{Tinverse}
\[
1=\mathbf{T}_{1}^{-1}[f](x)=f(x)-\int_{-x}^{x}\mathbf{K}_{1}(t,x;h)f(t)\,dt.
\]
Hence
\begin{equation}
\int_{-s}^{|s|}\mathbf{K}_{1}(t,s;h)f(t)\,dt=%
\begin{cases}
0, & \text{if\ }s\leq0,\\
f(s)-1, & \text{if\ }s>0.
\end{cases}
\label{1Tinv}%
\end{equation}
Combining \eqref{T2Volterra1} and \eqref{1Tinv}, we get
\begin{multline*}
\int_{0}^{x}f(t)\mathbf{T}_{1}[u^{\prime}](t)\,dt+u(0)=f(x)u(x)-\int_{0}%
^{x}f^{\prime}(t)u(t)\,dt-\\
-\int_{-x}^{x}u(s)\frac{d}{ds}\bigg(\int_{-s}^{x}\mathbf{K}_{1}%
(t,s;h)f(t)\,dt-\int_{-s}^{|s|}\mathbf{K}_{1}(t,s;h)f(t)\,dt\bigg)ds=\\
=f(x)u(x)-\int_{0}^{x}f^{\prime}(t)u(t)\,dt-\int_{-x}^{x}u(s)\frac{d}%
{ds}\bigg(\int_{-s}^{x}\mathbf{K}_{1}(t,s;h)f(t)\,dt\bigg)ds+\int_{0}%
^{x}u(s)f^{\prime}(s)\,ds=\\
=f(x)u(x)-\int_{-x}^{x}u(s)\bigg(\int_{-s}^{x}\frac{d}{ds}\mathbf{K}%
_{1}(t,s;h)f(t)\,dt+\mathbf{K}_{1}(-s,s;h)f(-s)\bigg)ds=\\
=f(x)u(x)-\int_{-x}^{x}u(s)\bigg(\int_{-s}^{x}\frac{d}{ds}\mathbf{K}%
_{1}(t,s;h)f(t)\,dt+\frac{h}{2}f(-s)\bigg)ds.
\end{multline*}
Hence
\[
T_{2}[u](x)=\frac{1}{f(x)}\bigg(\int_{0}^{x}f(\eta)\mathbf{T}_{1}[u^{\prime
}](\eta)\,d\eta+u(0)\bigg)=u(x)+\int_{-x}^{x}\mathbf{K}_{2}(x,t;-h)u(t)\,dt,
\]
where the kernel $\mathbf{K}_{2}$ is given by \eqref{K2}.
\end{proof}

\begin{remark}
\label{Remark1} Note that in the proof of Theorem \ref{Th T2Volterra} we made
use only of the fact that $u\in C^{1}[-a,a]$ and never required that $u$ be a
solution of the equation $\partial_{x}^{2}u+\omega^{2}u=0$. Therefore both
representations for the operator $T_{2}$ obtained in Theorems
\ref{Th T2Integral} and \ref{Th T2Volterra} coincide on any function $u\in
C^{1}[-a,a]$.
\end{remark}

\begin{remark}
If the integral kernel $\mathbf{K}_{1}$ is known only in the domain $|x|\leq
a,|t|\leq|x|$, from \eqref{T2Volterra1} it is also possible to obtain the
expression of the integral kernel for the Volterra integral operator
representing $T_{2}$. Since
\begin{multline*}
\frac{d}{dt}\bigg(\int_{|t|}^{x}\mathbf{K}_{1}(s,t;h)f(s)\,ds\bigg)=\int
_{|t|}^{x}\frac d{dt} \mathbf{K}_{1}(s,t;h)f(s)\,ds- \mathbf{K}_{1}%
(|t|,t;h)f(|t|)\cdot(|t|)^{\prime}=\\
=%
\begin{cases}
\int_{|t|}^{x}\frac d{dt} \mathbf{K}_{1}(s,t;h)f(s)\,ds-\frac h2f(t) -
\frac{f(t)}2\int_{0}^{t} q_{1}(s)\,ds & \text{for\ } t\ge0,\\
\int_{|t|}^{x}\frac d{dt} \mathbf{K}_{1}(s,t;h)f(s)\,ds+\frac h2f(-t) &
\text{for\ }t<0,
\end{cases}
\end{multline*}
the integral kernel for $x>0$ is given by the expression
\[
\mathbf{K}_{2}(x,t;-h)=%
\begin{cases}
-\frac{1}{f(x)}\Big(f^{\prime}(t)+\int_{|t|}^{x}\frac{d}{dt}\big(\mathbf{K}%
_{1}(s,t;h)\big)f(s)\,ds-\frac{h}{2}f(t)-\frac{f(t)}{2}\int_{0}^{t}%
q_{1}(s)\,ds\Big) & \text{if\ }t\geq0,\\
-\frac{1}{f(x)}\Big(\int_{|t|}^{x}\frac{d}{dt}\big(\mathbf{K}_{1}%
(s,t;h)\big)f(s)\,ds+\frac{h}{2}f(-t)\Big) & \text{if\ }t<0.
\end{cases}
\]

\end{remark}

\begin{corollary}
The operator $T_{2}$ given by \eqref{T2} with the kernel \eqref{K2} coincides
with $\mathbf{T}_{2}$ on $C[-a,a]$.
\end{corollary}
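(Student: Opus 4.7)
The plan is to show that the two operators agree on a dense subset of $C[-a,a]$ and are both continuous there, hence coincide everywhere. More explicitly, I would proceed in three steps.

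First, I would verify that both $T_{2}$ (in its Volterra form \eqref{T2} with kernel \eqref{K2}) and $\mathbf{T}_{2}$ are bounded operators on $C[-a,a]$ equipped with the supremum norm. For $\mathbf{T}_{2}$ this is immediate from the preceding discussion, since its kernel $\mathbf{K}_{2}(x,t;-h)$ was shown to belong to $C^{1}(\bar{\Pi})$. For the operator $T_{2}$ given by \eqref{T2}--\eqref{K2} I would simply observe that $f\in C^{2}[-a,a]$ is nonvanishing, $\mathbf{K}_{1}\in C^{1}(\bar{\Pi})$ and hence $\partial_{t}\mathbf{K}_{1}(s,t;h)$ is continuous, so the expression defining $\mathbf{K}_{2}(x,t;-h)$ is continuous on the closed square $|x|,|t|\leq a$. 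Thus both operators are continuous endomorphisms of $C[-a,a]$.

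Second, I would identify a dense subset of $C[-a,a]$ on which the two operators agree. Every exponential $e^{i\omega x}$ with $\omega\in\mathbb{C}$ is a solution of $(\partial_{x}^{2}+\omega^{2})u=0$ and lies in $C^{1}[-a,a]$. By Theorem \ref{Th T2Integral} the integral formula \eqref{T2sol} applied to such $u$ reproduces $\mathbf{T}_{2;-h}[u]=\mathbf{T}_{2}[u]$, while by Remark \ref{Remark1} the Volterra representation \eqref{T2} coincides with the integral formula on all of $C^{1}[-a,a]$, in particular on these exponentials. Hence $T_{2}$ and $\mathbf{T}_{2}$ agree on every finite linear combination of the functions $e^{i\omega x}$, $\omega\in\mathbb{C}$.

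Third, I would invoke density: the linear span of $\{e^{i\omega x}:\omega\in\mathbb{C}\}$ is dense in $C[-a,a]$ with the uniform norm. One can see this, for example, by noting that it contains the trigonometric polynomials (taking $\omega\in\mathbb{R}$), which are uniformly dense on any compact interval, or by observing that it contains all monomials $x^{k}$ as limits $\lim_{\omega\to 0}\partial_{\omega}^{k}e^{i\omega x}/(i)^{k}$ and invoking Weierstrass. Combining this density with the continuity established in the first step, the two continuous operators $T_{2}$ and $\mathbf{T}_{2}$ must coincide on all of $C[-a,a]$.

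There is no real obstacle here; the work has already been done in Theorems \ref{Th T2Integral}--\ref{Th T2Volterra} and Remark \ref{Remark1}. The only point requiring a modicum of care is justifying the density/continuity argument, but once both operators are recognized as Volterra operators with continuous kernels and shown to agree on exponentials, the conclusion is automatic.
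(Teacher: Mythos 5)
Your proposal is correct and follows essentially the same route as the paper: the paper's proof likewise observes that, by Theorems \ref{Th T2Integral} and \ref{Th T2Volterra}, the two Volterra operators agree on finite linear combinations of solutions of $(\partial_{x}^{2}+\omega^{2})u=0$, a set dense in $C[-a,a]$, and concludes by continuity. Your version merely makes explicit the choice of the exponentials $e^{i\omega x}$ as the dense family and spells out the density and boundedness details that the paper leaves implicit.
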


\begin{proof}
By Theorems \ref{Th T2Integral} and \ref{Th T2Volterra} the Volterra operators
$T_{2}$ and $\mathbf{T}_{2}$ coincide on the set of finite linear combinations
of solutions of the equations $(\partial_{x}^{2}+\omega^{2})u=0,\ \omega
\in\mathbb{C}$. Since this set is dense in $C[-a,a]$, by continuity of $T_{2}$
and $\mathbf{T}_{2}$ we obtain that they coincide on the whole $C[-a,a]$.
\end{proof}

The next corollary follows immediately from Remark \ref{Remark1}.

\begin{corollary}
The operator $T_{2}$ given by \eqref{T2sol} coincides with $\mathbf{T}_{2}$ on
$C^{1}[-a,a]$.
\end{corollary}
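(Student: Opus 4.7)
The plan is to chain together two facts that are already essentially on the table: the previous corollary, which identifies the Volterra form \eqref{T2} of $T_{2}$ with $\mathbf{T}_{2}$ on the full space $C[-a,a]$, and Remark \ref{Remark1}, which asserts that the two representations \eqref{T2sol} and \eqref{T2} of $T_{2}$ agree on $C^{1}[-a,a]$. Once both are in hand, the conclusion is immediate by transitivity, and since $C^{1}[-a,a]\subset C[-a,a]$, no further density or closure argument is needed.

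More concretely, first I would verify carefully the content of Remark \ref{Remark1} by revisiting the computation in the proof of Theorem \ref{Th T2Volterra}. The key observation is that the manipulations used there — integration by parts of $\int_{0}^{x}f(t)u'(t)\,dt$, Fubini in the double integral involving $\mathbf{K}_{1}(t,s;h)$, and differentiation of $\int_{|s|}^{x}\mathbf{K}_{1}(t,s;h)f(t)\,dt$ with respect to $s$ — all require only that $u'$ be continuous on $[-a,a]$; at no point is the ODE $u''+\omega^{2}u=0$ invoked. Moreover, the identity \eqref{1Tinv}, which is used to simplify the boundary contributions, is an intrinsic property of $\mathbf{T}_{1}$ and $f$ via Theorem \ref{Th Transmutation of Powers} and Theorem \ref{Th Inverse}, and thus holds independently of the test function $u$. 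Consequently, for every $u\in C^{1}[-a,a]$, the right-hand side of \eqref{T2sol} admits the explicit Volterra representation with kernel $\mathbf{K}_{2}(x,t;-h)$ given by \eqref{K2}.

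Then I would simply cite the previous corollary, which already states that the Volterra operator \eqref{T2}--\eqref{K2} agrees with $\mathbf{T}_{2}$ on $C[-a,a]$, and restrict to the subspace $C^{1}[-a,a]$. Combining this with the identity obtained in the previous paragraph yields $T_{2}[u]=\mathbf{T}_{2}[u]$ for every $u\in C^{1}[-a,a]$, which is the claim.

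I expect there to be no genuine obstacle here; this corollary is a bookkeeping statement whose entire content is already embedded in Theorem \ref{Th T2Volterra} and Remark \ref{Remark1}. The only point requiring a bit of care is making explicit that the proof of Theorem \ref{Th T2Volterra} truly uses nothing beyond $u\in C^{1}[-a,a]$ — in particular, that the boundary term $u(s)\int_{|s|}^{x}\mathbf{K}_{1}(t,s;h)f(t)\,dt\big|_{-x}^{x}$ vanishes purely for kernel-theoretic reasons and not because of any special property of solutions of $u''+\omega^{2}u=0$. Once this is noted, the corollary drops out.
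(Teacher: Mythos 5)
Your proposal is correct and follows essentially the same route as the paper, which derives this corollary immediately from Remark \ref{Remark1} (the observation that the proof of Theorem \ref{Th T2Volterra} uses only $u\in C^{1}[-a,a]$, so \eqref{T2sol} and \eqref{T2} agree there) combined with the preceding corollary identifying the Volterra form with $\mathbf{T}_{2}$. Your extra care in checking that the integration by parts, the change of order of integration, and the identity \eqref{1Tinv} never invoke the equation $u''+\omega^{2}u=0$ is exactly the content of that remark.
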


Operator $A_{1}$ is the Darboux transformation of the operator $A_{2}$ with
respect to the solution $1/f$, hence we obtain another relation between the
operators $\mathbf{T}_{1}$ and $\mathbf{T}_{2}$.

\begin{corollary}
For any function $u\in C^{1}[-a,a]$ the equality
\begin{equation}
\mathbf{T}_{1}[u](x)=f(x)\bigg(\int_{0}^{x}\frac{1}{f(\eta)}\mathbf{T}%
_{2}[u^{\prime}](\eta)\,d\eta+u(0)\bigg) \label{T1sol}%
\end{equation}
is valid.
\end{corollary}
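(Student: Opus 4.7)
The plan is to obtain this identity as a direct consequence of Theorem \ref{Th T2Integral} by symmetry. The crucial observation, already highlighted in the sentence preceding the statement, is that the entire construction is symmetric between the two Schrödinger operators under the swap $(A_1, f, h) \leftrightarrow (A_2, 1/f, -h)$. Indeed, $1/f$ is a nonvanishing solution of $A_2 u = 0$, it satisfies $(1/f)(0) = 1$ and $(1/f)'(0) = -h$, and a direct computation shows that the Darboux transformation of $A_2$ with respect to $1/f$ returns $A_1$: since $(1/f)'/(1/f) = -f'/f$, one has $2\bigl((1/f)'/(1/f)\bigr)^2 - q_2 = 2(f'/f)^2 - q_2 = q_1$.

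With this in place, I would apply Theorem \ref{Th T2Integral} with the pair $(A_2, 1/f)$ playing the role of $(A_1, f)$ and with the parameter $-h$ in place of $h$. The conclusion is that for every solution $u$ of $(\partial_x^2 + \omega^2)u = 0$, $\omega \in \mathbb{C}$, one has
$$f(x)\bigg(\int_0^x \frac{1}{f(\eta)}\mathbf{T}_2[u'](\eta)\,d\eta + u(0)\bigg) = \mathbf{T}_{1;-(-h)}[u](x) = \mathbf{T}_1[u](x).$$
To extend this from solutions of the oscillator equation to all $u \in C^1[-a,a]$, I would invoke the symmetric analogs of Theorem \ref{Th T2Volterra} and Remark \ref{Remark1}: the same integration-by-parts and change-of-order-of-integration manipulations that turned \eqref{T2sol} into the Volterra representation \eqref{T2} now turn the right-hand side of \eqref{T1sol} into a Volterra integral operator whose kernel is given by the analog of \eqref{K2} with $\mathbf{K}_2$ and $1/f$ in place of $\mathbf{K}_1$ and $f$. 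That manipulation uses only $u \in C^1[-a,a]$, together with the identity $\mathbf{T}_2[1] = 1/f$ and the inverse formula of Theorem \ref{Th Inverse} to evaluate the analog of \eqref{1Tinv}. The resulting Volterra operator equals $\mathbf{T}_1$ on the dense set of finite linear combinations of the functions $e^{i\omega x}$, and by continuity of both sides it equals $\mathbf{T}_1$ on all of $C^1[-a,a]$.

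The only genuine point requiring care is the bookkeeping of the parameter: because what we have been calling $\mathbf{T}_2$ is by construction $\mathbf{T}_{2;-h}$, the ``$h$'' that enters the symmetrized theorem is $-h$, and therefore the operator that the symmetrized theorem produces is $\mathbf{T}_{1;-(-h)} = \mathbf{T}_{1;h} = \mathbf{T}_1$, as claimed. Beyond this sign tracking there is no real obstacle, since every ingredient used in the proofs of Theorems \ref{Th T2Integral} and \ref{Th T2Volterra}, namely the nonvanishing of the reference solution, the factorizations \eqref{A1factor}--\eqref{A2factor} (read with $A_1$ and $A_2$ interchanged), and the initial values at $x=0$, transfers verbatim to the swapped setting.
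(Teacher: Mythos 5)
Your proposal is correct and follows essentially the same route as the paper: the paper derives \eqref{T1sol} precisely by observing that $A_1$ is the Darboux transformation of $A_2$ with respect to $1/f$ and applying Theorem \ref{Th T2Integral} (together with the Volterra extension of Theorem \ref{Th T2Volterra} and Remark \ref{Remark1}) with the roles of the two operators swapped. Your sign bookkeeping ($\mathbf{T}_2 = \mathbf{T}_{2;-h}$, so the swapped theorem yields $\mathbf{T}_{1;h} = \mathbf{T}_1$) and the verification that $2\bigl((1/f)'/(1/f)\bigr)^2 - q_2 = q_1$ are exactly the details the paper leaves implicit.
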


From the second commutative diagram at the beginning of this subsection we may
deduce some commutation relations between the operators $\mathbf{T}_{1}$,
$\mathbf{T}_{2}$ and $d/dx$. The proof immediately follows from \eqref{T2sol}
and \eqref{T1sol}.

\begin{corollary}
\label{Cor Commutation Relations}The following operator equalities hold on
$C^{1}[-a,a]$:
\begin{align}
\partial_{x}f\mathbf{T}_{2}  &  =f\mathbf{T}_{1}\partial_{x}
\label{CommutT1dx}\\
\partial_{x}\frac{1}{f}\mathbf{T}_{1}  &  =\frac{1}{f}\mathbf{T}_{2}%
\partial_{x}. \label{CommutT2dx}%
\end{align}

\end{corollary}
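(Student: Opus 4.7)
The plan is to read off both commutation relations directly from the two integral identities \eqref{T2sol} and \eqref{T1sol}, which have essentially the same structure: each expresses one of $\mathbf{T}_1$, $\mathbf{T}_2$ applied to $u$ as an antiderivative of the other applied to $u'$, weighted by $f$ or $1/f$. Since those two identities have already been established on $C^1[-a,a]$, the proof of the commutation relations reduces to a single line of differentiation.

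More precisely, for the first identity I would take \eqref{T2sol}, multiply through by $f(x)$ to get
\[
f(x)\mathbf{T}_2[u](x) \;=\; \int_0^x f(\eta)\,\mathbf{T}_1[u'](\eta)\,d\eta \;+\; u(0),
\]
which is legitimate whenever $u\in C^1[-a,a]$, since then $u'\in C[-a,a]$ and the integrand is continuous. The right-hand side is then an antiderivative with respect to $x$ of $f(x)\mathbf{T}_1[u'](x)$, so differentiating both sides in $x$ yields
\[
\partial_x\bigl(f\,\mathbf{T}_2[u]\bigr)(x) \;=\; f(x)\,\mathbf{T}_1[u'](x) \;=\; \bigl(f\,\mathbf{T}_1\,\partial_x\bigr)[u](x),
\]
which is exactly \eqref{CommutT1dx}. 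The same argument applied to \eqref{T1sol}, after multiplying by $1/f(x)$ and differentiating, delivers \eqref{CommutT2dx}.

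There is essentially no obstacle here. The only thing to verify is that the hypotheses under which \eqref{T2sol} and \eqref{T1sol} hold (namely, $u\in C^1[-a,a]$, which is the content of Remark~\ref{Remark1} and the preceding corollary) are also sufficient to justify differentiating the corresponding antiderivatives: continuity of $\mathbf{T}_1[u']$ and $\mathbf{T}_2[u']$ as functions of $x$, which follows from the continuity of the Volterra kernels $\mathbf{K}_1, \mathbf{K}_2$ and of $u'$. Once that is in hand, the two commutation relations are simply the derivative form of \eqref{T2sol} and \eqref{T1sol}, respectively.
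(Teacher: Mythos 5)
Your argument is exactly the paper's: the authors state that the corollary "immediately follows from \eqref{T2sol} and \eqref{T1sol}," and what you have written out (multiplying by $f$ or $1/f$ and differentiating the resulting antiderivative, both legitimate for $u\in C^{1}[-a,a]$) is precisely that one-line deduction made explicit. The proposal is correct and matches the paper's proof.
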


Commutation relations \eqref{CommutT1dx} and \eqref{CommutT2dx} allow us to
prove a more general version of \cite[Theorem 11]{CKT} which is also Theorem
\ref{Th Transmutation} with different conditions on the potential $q$.

\begin{theorem}
\label{Th Transmutation C} Under the conditions of Theorem
\ref{Th Transmutation of Powers} the operator \eqref{Tmain} with the kernel
defined by \eqref{Kmain} satisfies
\[
\left(  -\frac{d^{2}}{dx^{2}}+q(x)\right)  \mathbf{T}_{h}[u]=\mathbf{T}%
_{h}\left[  -\frac{d^{2}u}{dx^{2}}\right]
\]
for any $u\in C^{2}[-a,a]$.
\end{theorem}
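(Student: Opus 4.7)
The plan is to exploit the Darboux factorization \eqref{A1factor}, namely $-A_1 + q = \frac{1}{f}\partial_x f^2 \partial_x \frac{1}{f}\cdot$ applied on the right to $\mathbf{T}_h$, and then push the two $\partial_x$'s one at a time through $\mathbf{T}_h$ using the already-established commutation relations \eqref{CommutT1dx} and \eqref{CommutT2dx} of Corollary \ref{Cor Commutation Relations}. This is the natural way to turn two applications of $\partial_x$ after $\mathbf{T}_h$ into two applications of $\partial_x$ before $\mathbf{T}_h$, which is exactly what the transmutation identity says. Note that the conditions on $q$ here (merely $q\in C[-a,a]$ plus the existence of a nonvanishing $f$) are precisely the conditions under which $\mathbf{T}_2$ was constructed in Theorem \ref{Th T2Volterra} and under which Corollary \ref{Cor Commutation Relations} was derived, so no additional regularity is needed.

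Concretely, with $\mathbf{T}_1=\mathbf{T}_h$, I would take $u\in C^2[-a,a]$, note that $u'\in C^1[-a,a]$, and compute
\[
A_1\mathbf{T}_1[u]\;=\;\frac{1}{f}\,\partial_x f^2\,\partial_x\Bigl(\frac{1}{f}\,\mathbf{T}_1[u]\Bigr).
\]
Apply \eqref{CommutT2dx} to $u\in C^1[-a,a]$ to replace $\partial_x\frac{1}{f}\mathbf{T}_1[u]$ by $\frac{1}{f}\mathbf{T}_2[u']$; using $f^2\cdot\frac{1}{f}=f$ this reduces the right-hand side to $\frac{1}{f}\,\partial_x\bigl(f\,\mathbf{T}_2[u']\bigr)$. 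Now apply \eqref{CommutT1dx} to $u'\in C^1[-a,a]$ to replace $\partial_x(f\,\mathbf{T}_2[u'])$ by $f\,\mathbf{T}_1[u'']$. Cancelling the outer $1/f$ against $f$ gives $A_1\mathbf{T}_1[u]=\mathbf{T}_1[u'']$, which rewritten is exactly
\[
\Bigl(-\tfrac{d^2}{dx^2}+q(x)\Bigr)\mathbf{T}_h[u]\;=\;\mathbf{T}_h\Bigl[-\tfrac{d^2u}{dx^2}\Bigr].
\]

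There is essentially no hard step remaining: the technical content was packed into the construction of $\mathbf{T}_2$ in Theorem \ref{Th T2Volterra} and into Corollary \ref{Cor Commutation Relations}, both of which already work under the weak hypothesis $q\in C[-a,a]$ (the key being that \eqref{T2sol} and \eqref{T1sol} only require $u\in C^1[-a,a]$ and never differentiate the potential). The only thing to verify along the way is that each intermediate expression lies in the domain of the next operator, and this is automatic: $\mathbf{T}_1[u]$ and $\mathbf{T}_2[u']$ are $C^1$ because they are Volterra integrals with a continuously differentiable kernel acting on $C^1$ inputs, so all the $\partial_x$'s above are legitimately applied.
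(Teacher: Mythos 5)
Your proof is correct and is essentially identical to the paper's: both apply the factorization \eqref{A1factor} together with the commutation relations \eqref{CommutT1dx} and \eqref{CommutT2dx} to push the two derivatives through $\mathbf{T}_h$, the only difference being that you run the chain of equalities from $A\mathbf{T}_h[u]$ to $\mathbf{T}_h[u'']$ while the paper runs it in the reverse direction. Your added remark about the intermediate expressions being $C^1$ is a harmless elaboration of what the paper leaves implicit.
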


\begin{proof}
Let $\mathbf{T}_{2}$ be the transmutation operator of the Darboux
transformation of the operator $A=-\frac{d^{2}}{dx^{2}}+q(x)$ with respect to
the solution $f$. Let $u\in C^{2}[-a,a]$, then $u^{\prime}\in C^{1}[-a,a]$ and
we obtain by \eqref{CommutT1dx}, \eqref{CommutT2dx} and \eqref{A1factor}
\[
\mathbf{T}_{h}\left[  \frac{d^{2}u}{dx^{2}}\right]  =\frac{1}{f}\frac{d}%
{dx}\big[f\mathbf{T}_{2}u^{\prime}\big]=\frac{1}{f}\frac{d}{dx}f^{2}\frac
{d}{dx}\frac{1}{f}\mathbf{T}_{h}u=\left(  \frac{d^{2}}{dx^{2}}-q(x)\right)
\mathbf{T}_{h}u.
\]

\end{proof}

In \cite{KMoT} the following notion of generalized derivatives was introduced.
Consider a function $g$ assuming that both $f$ and $g$ possess the derivatives
of all orders up to the order $n$ on the segment $[-a,a]$. Then in $[-a,a]$
the following generalized derivatives are defined
\begin{align*}
\gamma_{0}(g)(x) &  =g(x),\\
\gamma_{k}(g)(x) &  =\big(f^{2}(x)\big)^{(-1)^{k-1}}\big(\gamma_{k-1}%
(g)\big)^{\prime}(x)
\end{align*}
for $k=1,2,\ldots,n$.

Let a function $u$ be defined by the equality
\[
g=\frac{1}{f}\mathbf{T}_{1}u,
\]
and assume that $u\in C^{n}[-a,a]$. Note that below we do not necessarily
require that the functions $f$ and $g$ be from $C^{n}[-a,a]$. With the use of
\eqref{CommutT1dx} and \eqref{CommutT2dx} we have
\begin{align*}
\gamma_{1}(g) &  =f^{2}\cdot\Big(\frac{1}{f}\mathbf{T}_{1}u\Big)^{\prime
}=f^{2}\cdot\frac{1}{f}\mathbf{T}_{2}u^{\prime}=f\mathbf{T}_{2}u^{\prime},\\
\gamma_{2}(g) &  =\frac{1}{f^{2}}\cdot\Big(f\mathbf{T}_{2}u^{\prime
}\Big)^{\prime}=\frac{1}{f^{2}}\cdot f\mathbf{T}_{1}u^{\prime\prime}=\frac
{1}{f}\mathbf{T}_{1}u^{\prime\prime}.
\end{align*}
By induction we obtain the following corollary.

\begin{corollary}
\label{GeneralDerivTransm} Let $u\in C^{n}[-a,a]$ and $g=\frac{1}{f}%
\mathbf{T}_{1}u$. Then
\[
\gamma_{k}(g)=f\mathbf{T}_{2}u^{(k)}\qquad\text{if\ }k\text{\ is odd,}\ k\leq
n,
\]
and
\[
\gamma_{k}(g)=\frac{1}{f}\mathbf{T}_{1}u^{(k)}\qquad\text{if\ }k\text{\ is
even,}\ k\leq n.
\]

\end{corollary}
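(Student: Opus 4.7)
The plan is to proceed by induction on $k$, using the two commutation relations \eqref{CommutT1dx} and \eqref{CommutT2dx} from Corollary \ref{Cor Commutation Relations} to advance the index by one at each step, with the parity of $k$ dictating which of the two identities is invoked.

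The base case $k=0$ is simply the defining equation $g = \frac{1}{f}\mathbf{T}_1 u$, which matches the "$k$ even" formula. The cases $k=1$ and $k=2$ are already carried out in the paragraph immediately preceding the statement, and these form the template for the inductive step.

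For the inductive step, suppose the formulas hold through index $k-1 \le n-1$. Note that $u \in C^n[-a,a]$ guarantees $u^{(k-1)} \in C^1[-a,a]$, so the commutation relations \eqref{CommutT1dx} and \eqref{CommutT2dx} may legitimately be applied to it. If $k$ is odd, then $k-1$ is even and by hypothesis $\gamma_{k-1}(g) = \frac{1}{f}\mathbf{T}_1 u^{(k-1)}$; the definition of $\gamma_k$ gives
\[
\gamma_k(g) = f^2 \cdot \bigl(\gamma_{k-1}(g)\bigr)' = f^2 \cdot \Bigl(\tfrac{1}{f}\mathbf{T}_1 u^{(k-1)}\Bigr)' = f^2 \cdot \tfrac{1}{f}\mathbf{T}_2 u^{(k)} = f\mathbf{T}_2 u^{(k)},
\]
where the third equality is \eqref{CommutT2dx}. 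Symmetrically, if $k$ is even, then $k-1$ is odd, $\gamma_{k-1}(g) = f\mathbf{T}_2 u^{(k-1)}$, and
\[
\gamma_k(g) = \tfrac{1}{f^2} \cdot \bigl(f\mathbf{T}_2 u^{(k-1)}\bigr)' = \tfrac{1}{f^2} \cdot f\mathbf{T}_1 u^{(k)} = \tfrac{1}{f}\mathbf{T}_1 u^{(k)},
\]
where the second equality is \eqref{CommutT1dx}. This closes the induction.

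There is no real obstacle here: the argument is essentially the iteration of the two commutation identities, and the only points requiring mild care are tracking the sign $(-1)^{k-1}$ in the definition of $\gamma_k$ against the parity of $k$, and verifying that the regularity assumption $u \in C^n$ furnishes enough differentiability of $u^{(k-1)}$ for the commutation relations (which are stated for $C^1$ functions) to apply at each step up to $k=n$.
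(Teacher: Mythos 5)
Your proof is correct and follows exactly the paper's route: the paper computes $\gamma_{1}(g)$ and $\gamma_{2}(g)$ via the commutation relations \eqref{CommutT1dx} and \eqref{CommutT2dx} and then asserts the general case ``by induction,'' which is precisely the induction you carry out. Your added care about the sign $(-1)^{k-1}$ and the $C^{1}$ regularity needed to apply the commutation relations at each step is a welcome (and accurate) filling-in of details the paper leaves implicit.
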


To finish this subsection, let us consider how the operator $\mathbf{T}%
_{2;-h}$ acts on powers of the independent variable. By Theorem
\ref{Th Transmutation of Powers} we have to construct the system of functions
$\{\varphi_{k}\}$ by formulas \eqref{X1}--\eqref{X3} and \eqref{phik} starting
with the function $1/f$. As it can be seen from \eqref{X1}--\eqref{X3}, for
any nonnegative integer $k$ we have
\[
\widetilde{X}_{1/f}^{(k)}=X_{f}^{(k)}\qquad\text{and}\qquad X_{1/f}%
^{(k)}=\widetilde{X}_{f}^{(k)},
\]
where the subindex $f$ or $1/f$ corresponds to the starting function used in
\eqref{X1}--\eqref{X3}. Therefore constructing the functions $X^{(k)}$,
$\widetilde{X}^{(k)}$ according to \eqref{X1}--\eqref{X3} and defining
\begin{equation}
\psi_{k}(x)=%
\begin{cases}
\frac{1}{f(x)}X^{(k)}, & k\ \text{even},\\
\frac{1}{f(x)}\widetilde{X}^{(k)}, & k\ \text{odd},
\end{cases}
\label{psik}%
\end{equation}
(here the \textquotedblleft second half\textquotedblright\ of the formal
powers \eqref{X1}--\eqref{X3} is used, cf. (\ref{phik})) we obtain the
following statement.

\begin{proposition}
Let a potential $q_{1}$ and a function $f$ be as in Theorem
\ref{Th Transmutation of Powers}. Let the operator $\mathbf{T}_{2;-h}$ be the
transmutation operator for the Darboux transformed operator $A_{2}$. Then
$\mathbf{T}_{2;-h}$ transforms $x^{k}$ into $\psi_{k}(x)$ for any
$k\in\mathbb{N}_{0}$.
\end{proposition}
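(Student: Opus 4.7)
The plan is to reduce this statement directly to Theorem \ref{Th Transmutation of Powers} applied to the operator $A_{2}$ with the nonvanishing solution $1/f$, and then observe that the swap $f \leftrightarrow 1/f$ interchanges the roles of the two families $X^{(k)}$ and $\widetilde{X}^{(k)}$ in the recursive definitions (\ref{X1})--(\ref{X3}).

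First I would check that $1/f$ satisfies the hypotheses of Theorem \ref{Th Transmutation of Powers} relative to the potential $q_{2}$. By the standard Darboux fact (already recalled after (\ref{A2factor})), $1/f$ is a solution of $A_{2}u = 0$. Since $f \in C^{2}(-a,a)$ and both $f$ and $1/f$ are bounded on $[-a,a]$, the same is true of $1/f$ and $f$ with their roles exchanged. Moreover $1/f(0) = 1$ and $(1/f)'(0) = -f'(0)/f(0)^{2} = -h$, which matches the parameter $-h$ that labels $\mathbf{T}_{2;-h}$.

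Next, applying Theorem \ref{Th Transmutation of Powers} to $A_{2}$, $q_{2}$, and the solution $1/f$, we obtain that $\mathbf{T}_{2;-h}$ maps $x^{k}$ into the function $\widetilde{\varphi}_{k}$ defined by (\ref{phik}) with $f$ replaced by $1/f$ in (\ref{X1})--(\ref{X3}), namely
\[
\widetilde{\varphi}_{k}(x) = \begin{cases} \frac{1}{f(x)}\, X_{1/f}^{(k)}(x), & k \text{ odd}, \\ \frac{1}{f(x)}\, \widetilde{X}_{1/f}^{(k)}(x), & k \text{ even}, \end{cases}
\]
where the subscript indicates the starting function used in the recursion.

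The final step, which I expect to be the only one requiring attention, is the symbolic identification of $\widetilde{\varphi}_{k}$ with $\psi_{k}$ from (\ref{psik}). A direct induction on $k$ using (\ref{X2})--(\ref{X3}) shows that replacing $f$ by $1/f$ replaces $(f^{2})^{(-1)^{n-1}}$ by $(f^{2})^{(-1)^{n}}$ and vice versa; hence the recursions defining $\widetilde{X}^{(n)}$ and $X^{(n)}$ exchange their roles. Formally,
\[
\widetilde{X}_{1/f}^{(k)} = X_{f}^{(k)}, \qquad X_{1/f}^{(k)} = \widetilde{X}_{f}^{(k)}
\]
for every $k \in \mathbb{N}_{0}$. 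Substituting these identities into the formula for $\widetilde{\varphi}_{k}$ yields exactly $\psi_{k}$ as defined in (\ref{psik}), which completes the proof. The only real obstacle is keeping track of the index parity convention in (\ref{psik}) versus (\ref{phik}), but the parity swap is precisely compensated by the $\widetilde{X} \leftrightarrow X$ swap just noted.
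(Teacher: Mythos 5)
Your proposal is correct and follows essentially the same route as the paper: the paper likewise applies Theorem \ref{Th Transmutation of Powers} to $A_{2}$ with the nonvanishing solution $1/f$ (noting $1/f(0)=1$, $(1/f)'(0)=-h$) and uses the identities $\widetilde{X}_{1/f}^{(k)}=X_{f}^{(k)}$, $X_{1/f}^{(k)}=\widetilde{X}_{f}^{(k)}$ to identify the resulting functions with $\psi_{k}$ from \eqref{psik}. No gaps.
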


\subsection{Examples}

\label{subsectExamples} We start with the operator $A_{0}=d^{2}/dx^{2}$. We
have to pick up such a solution $f$ of the equation $A_{0}f=0$ that
$f^{\prime}/f\neq0$. This is in order to obtain an operator $A_{1}\neq A_{0}$
as a result of the Darboux transformation of $A_{0}$. As such solution
consider, e.g., $f_{0}(x)=x+1$. Both $f_{0}$ and $1/f_{0}$ are bounded on any
segment $[-a,a]\subset(-1;1)$ and the Darboux transformed operator has the
form $A_{1}=\frac{d^{2}}{dx^{2}}-\frac{2}{(x+1)^{2}}$.

The transmutation operator $T$ for $A_{0}$ is obviously an
identity operator and $K_{0}(x,t;0)=0$. Since $f_{0}^{\prime}(0)=1$, we look
for the parametrized operator $\mathbf{T}_{0;1}$. Its kernel is given by
\eqref{KmainChangeOfH}: $\mathbf{K}_{0}(x,t;1)=1/2$. From Theorem
\ref{Th T2Volterra} we obtain the transmutation kernel for the operator
$A_{1}$
\begin{equation}
\mathbf{K}_{1}(x,t;-1)=-\frac{1}{x+1}\cdot\frac{1-t}{2}=\frac{t-1}{2(x+1)},
\label{ExampleK1}%
\end{equation}
the kernel from Example \ref{modelExample}.

To obtain a less trivial example consider again the operator $A_{1}%
=\frac{d^{2}}{dx^{2}}-\frac{2}{(x+1)^{2}}$ and the function $f_{1}%
(x)=(x+1)^{2}$ as a solution of $A_{1}f=0$. Since $h=f_{1}^{\prime}(0)=2$, we
compute $\mathbf{K}_{1}(x,t;2)$ from \eqref{ExampleK1} using
\eqref{KmainChangeOfH}
\[
\mathbf{K}_{1}(x,t;2)=\frac{3x^{2}+6x+4-3t^{2}+2t}{4(x+1)}.
\]
The Darboux transformation of the operator $A_{1}$ with respect to the
solution $f_{1}$ is the operator $A_{2}=\frac{d^{2}}{dx^{2}}-\frac
{6}{(x+1)^{2}}$ and by Theorem \ref{Th T2Volterra} the transmutation operator
$\mathbf{T}_{2;-2}$ for $A_{2}$ is given by the Volterra integral operator
\eqref{Tmain} with the kernel
\[
\mathbf{K}_{2}(x,t;-2)=-\frac{1}{(x+1)^{2}}\bigg(\int_{-t}^{x}\frac
{-3t+1}{2(s+1)}(s+1)^{2}\,ds+(1-t)^{2}\bigg)=\frac{(3t-1)(x+1)^{2}%
-3(t-1)^{2}(t+1)}{4(x+1)^{2}}.
\]

This procedure may be continued iteratively. Consider operators
\[
A_{n}:=\frac{d^{2}}{dx^{2}}-\frac{n(n+1)}{(x+1)^{2}}.
\]
$f_{n}(x)=(x+1)^{n+1}$ is a solution of the equation $A_{n}f=0$. The Darboux
transformation of the operator $A_{n}$ with respect to the solution $f_{n}$ is
the operator
\[
\frac{d^{2}}{dx^{2}}-2\Big(\frac{f_{n}^{\prime}(x)}{f_{n}(x)}\Big)^{2}%
+\frac{n(n+1)}{(x+1)^{2}}=\frac{d^{2}}{dx^{2}}-\frac{(n+1)(n+2)}{(x+1)^{2}},
\]
i.e., exactly the operator $A_{n+1}$. If we know $\mathbf{K}_{n}(x,t;-n)$ for
the operator $A_{n}$, by \eqref{KmainChangeOfH} we compute the kernel
$\mathbf{K}_{n}(x,t;n+1)$ corresponding to the solution $f_{n}(x)$ and by
Theorem \ref{Th T2Volterra} we may calculate the kernel $\mathbf{K}%
_{n+1}(x,t;-n-1)$. Careful analysis shows that we have to integrate only
polynomials in all integrals involved, so the described procedure can be
performed up to any fixed $n$.

Consider the Schr\"{o}dinger equation
\begin{equation}
u^{\prime\prime}+2\sech^{2}(x)\,u=u. \label{ExampleSoliton}%
\end{equation}
This equation appears in soliton theory and as an example of a reflectionless
potential in the one-dimensional quantum scattering theory (see, e.g.
\cite{Lamb}). Equation \eqref{ExampleSoliton} can be obtained as a result of
the Darboux transformation of the equation $u^{\prime\prime}=u$ with respect
to the solution $f(x)=\cosh x$. The transmutation operator for the operator
$A_{1}=\partial_{x}^{2}-1$ was calculated in \cite[Example 3]{CKT}. Its kernel
is given by the expression
\[
\mathbf{K}_{1}(x,t;0)=-\frac{1}{2}\frac{\sqrt{x^{2}-t^{2}}I_{1}(\sqrt
{x^{2}-t^{2}})}{x-t},
\]
where $I_{1}$ is the modified Bessel function of the first kind. Hence from
Theorem \ref{Th T2Volterra} we obtain the transmutation kernel for the
operator $A_{2}=\partial_{x}^{2}+2\sech^{2}x-1$
\[
\mathbf{K}_{2}(x,t;0)=\frac{1}{2\cosh(x)}\int_{-t}^{x}\left(  \frac
{I_{0}(\sqrt{s^{2}-t^{2}})t}{s-t}+\frac{\sqrt{s^{2}-t^{2}}I_{1}(\sqrt
{s^{2}-t^{2}})}{(s-t)^{2}}\right)  \cosh s\,ds.
\]

\section{Transmutation operator for the one-dimensional Dirac equation with a
Lorentz scalar potential\label{Sect Transmutation Dirac}}

One-dimensional Dirac equations with Lorentz scalar potentials are widely
studied (see, for example, \cite{Casahorran, Chen, Hiller, Ho, JP, KP, KhR,
NT, RV, Rukeng} and \cite{NPS} for intertwining techniques for it).

According to \cite{NT} the Dirac equation in one space dimension with a
Lorentz scalar potential can be written as
\begin{align}
(\partial_{x}+m+S(x))\psi_{1}  &  =E\psi_{2},\label{Dirac1}\\
(-\partial_{x}+m+S(x))\psi_{2}  &  =E\psi_{1}, \label{Dirac2}%
\end{align}
where $m$ ($m>0$) is the mass and $S(x)$ is a Lorentz scalar. Denote
$\eta=m+S$ and write the system \eqref{Dirac1}, \eqref{Dirac2} in a matrix
form as
\begin{equation}%
\begin{pmatrix}
\partial_{x}+\eta & 0\\
0 & \partial_{x}-\eta
\end{pmatrix}%
\begin{pmatrix}
\psi_{1}\\
\psi_{2}%
\end{pmatrix}
=E%
\begin{pmatrix}
0 & 1\\
-1 & 0
\end{pmatrix}%
\begin{pmatrix}
\psi_{1}\\
\psi_{2}%
\end{pmatrix}
. \label{DiracMatrix}%
\end{equation}
In order to apply the results on the transmutation operators and
factorizations \eqref{A1factor}, \eqref{A2factor} we consider a function $f$
such that
\[
\frac{f^{\prime}(x)}{f(x)}=-\eta=-m-S(x).
\]
We can take $f(x)=\exp\left(  -\int_{0}^{x}(m+S(s))\,ds\right)  $, then
$f(0)=1 $ and $f$ does not vanish. Suppose the operators $\mathbf{T}_{1}$ and
$\mathbf{T}_{2}$ are transmutations for the operators $A_{1}=\big(\partial
_{x}+\frac{f^{\prime}}{f}\big)\big(\partial_{x}-\frac{f^{\prime}}{f}\big)$ and
$A_{2}=\big(\partial_{x}-\frac{f^{\prime}}{f}\big)\big(\partial_{x}%
+\frac{f^{\prime}}{f}\big)$ respectively (corresponding to functions $f$ and
$1/f$ in the sense of Proposition \ref{ThMapsSolutions}). We will look for a
solution of equation \eqref{DiracMatrix} in the form
\[%
\begin{pmatrix}
\psi_{1}\\
\psi_{2}%
\end{pmatrix}
=%
\begin{pmatrix}
\mathbf{T}_{1} & 0\\
0 & \mathbf{T}_{2}%
\end{pmatrix}%
\begin{pmatrix}
u_{1}\\
u_{2}%
\end{pmatrix}
,
\]
where $u_{1}$ and $u_{2}$ are some functions. From the commutation relations
\eqref{CommutT1dx} and \eqref{CommutT2dx} we have
\[%
\begin{pmatrix}
\partial_{x}-\frac{f^{\prime}}{f} & 0\\
0 & \partial_{x}+\frac{f^{\prime}}{f}%
\end{pmatrix}%
\begin{pmatrix}
\mathbf{T}_{1} & 0\\
0 & \mathbf{T}_{2}%
\end{pmatrix}%
\begin{pmatrix}
u_{1}\\
u_{2}%
\end{pmatrix}
=%
\begin{pmatrix}
f\partial_{x}\frac{1}{f}\mathbf{T}_{1} & 0\\
0 & \frac{1}{f}\partial_{x}f\mathbf{T}_{2}%
\end{pmatrix}%
\begin{pmatrix}
u_{1}\\
u_{2}%
\end{pmatrix}
=%
\begin{pmatrix}
\mathbf{T}_{2} & 0\\
0 & \mathbf{T}_{1}%
\end{pmatrix}%
\begin{pmatrix}
\partial_{x} & 0\\
0 & \partial_{x}%
\end{pmatrix}%
\begin{pmatrix}
u_{1}\\
u_{2}%
\end{pmatrix}
,
\]
hence
\[%
\begin{pmatrix}
\mathbf{T}_{2} & 0\\
0 & \mathbf{T}_{1}%
\end{pmatrix}%
\begin{pmatrix}
\partial_{x} & 0\\
0 & \partial_{x}%
\end{pmatrix}%
\begin{pmatrix}
u_{1}\\
u_{2}%
\end{pmatrix}
=E%
\begin{pmatrix}
0 & 1\\
-1 & 0
\end{pmatrix}%
\begin{pmatrix}
\mathbf{T}_{1} & 0\\
0 & \mathbf{T}_{2}%
\end{pmatrix}%
\begin{pmatrix}
u_{1}\\
u_{2}%
\end{pmatrix}
.
\]
By multiplying both sides by the inverse matrix $\Bigl(%
\begin{smallmatrix}
\mathbf{T}_{2}^{-1} & 0\\
0 & \mathbf{T}_{1}^{-1}%
\end{smallmatrix}
\Bigr)$ we obtain
\begin{multline*}%
\begin{pmatrix}
\partial_{x} & 0\\
0 & \partial_{x}%
\end{pmatrix}%
\begin{pmatrix}
u_{1}\\
u_{2}%
\end{pmatrix}
=E%
\begin{pmatrix}
\mathbf{T}_{2}^{-1} & 0\\
0 & \mathbf{T}_{1}^{-1}%
\end{pmatrix}%
\begin{pmatrix}
0 & 1\\
-1 & 0
\end{pmatrix}%
\begin{pmatrix}
\mathbf{T}_{1} & 0\\
0 & \mathbf{T}_{2}%
\end{pmatrix}%
\begin{pmatrix}
u_{1}\\
u_{2}%
\end{pmatrix}
=\\
=E%
\begin{pmatrix}
0 & \mathbf{T}_{2}^{-1}\\
-\mathbf{T}_{1}^{-1} & 0
\end{pmatrix}%
\begin{pmatrix}
\mathbf{T}_{1} & 0\\
0 & \mathbf{T}_{2}%
\end{pmatrix}%
\begin{pmatrix}
u_{1}\\
u_{2}%
\end{pmatrix}
=E%
\begin{pmatrix}
0 & 1\\
-1 & 0
\end{pmatrix}%
\begin{pmatrix}
u_{1}\\
u_{2}%
\end{pmatrix}
.
\end{multline*}
Therefore the operator $\bigl(%
\begin{smallmatrix}
\mathbf{T}_{1} & 0\\
0 & \mathbf{T}_{2}%
\end{smallmatrix}
\bigr)$ transmutes any solution $\bigl(%
\begin{smallmatrix}
u_{1}\\
u_{2}%
\end{smallmatrix}
\bigr)$ of the system
\begin{align}
u_{1}^{\prime}  &  =Eu_{2}\label{DiracTriv1}\\
u_{2}^{\prime}  &  =-Eu_{1} \label{DiracTriv2}%
\end{align}
into the solution $\bigl(%
\begin{smallmatrix}
\psi_{1}\\
\psi_{2}%
\end{smallmatrix}
\bigr)$ of the system \eqref{Dirac1}, \eqref{Dirac2} with the initial
conditions $\psi_{1}(0)=u_{1}(0)$, $\psi_{2}(0)=u_{2}(0)$. And vice versa if
$\bigl(%
\begin{smallmatrix}
\psi_{1}\\
\psi_{2}%
\end{smallmatrix}
\bigr)$ is a solution of the system \eqref{Dirac1}, \eqref{Dirac2}, then the
operator $\Bigl(%
\begin{smallmatrix}
\mathbf{T}_{1}^{-1} & 0\\
0 & \mathbf{T}_{2}^{-1}%
\end{smallmatrix}
\Bigr)$ transmutes it into the solution $\bigl(%
\begin{smallmatrix}
u_{1}\\
u_{2}%
\end{smallmatrix}
\bigr)$ of \eqref{DiracTriv1}, \eqref{DiracTriv2} such that $u_{1}(0)=\psi
_{1}(0)$, $u_{2}(0)=\psi_{2}(0)$.

\section{Conclusions\label{Sect Conclusions}}

An explicit representation for the transmutation operator corresponding to a
Darboux transformed Schr\"{o}dinger operator is given in terms of the
transmutation kernel for its superpartner and as a corollary the transmutation
operator for the one-dimensional Dirac system with a scalar potential is
obtained. Several examples of explicitly constructed transmutation operators
are given. We expect that the techniques developed in the present paper will
be used in practical applications of the transmutation operators.

{\footnotesize

}

\end{document}